\documentclass[aps,pra,10pt,tightenlines,longbibliography,notitlepage,amsmath,twocolumn,floatfix,superscriptaddress,eqsecnum]{revtex4-1}
\usepackage[dvipsnames]{xcolor}
\usepackage[colorlinks=true,bookmarks=false,linkcolor=NavyBlue,urlcolor=NavyBlue,citecolor=NavyBlue]{hyperref}
\usepackage{amsmath}
\usepackage{amsfonts}
\usepackage{tabularx}
\usepackage{graphicx}
\usepackage{times}
\usepackage{mathtools}
\usepackage{braket}
\usepackage{amsthm}
\usepackage{amssymb}
\usepackage{enumitem}

\newtheorem{theorem}{Theorem}

\newtheorem{proposition}{Proposition}
\newtheorem{lemma}{Lemma}
\newtheorem{alemma}{Lemma}[section]
\newtheorem{definition}{Definition}[section]
\theoremstyle{remark}
\newtheorem*{remark}{Remark}

\newcommand{\parti}[2]{\frac{\partial #1}{\partial #2}}

\newcommand{\intall}{\int_{-\infty}^{\infty}}
\newcommand{\avg}[1]{\langle#1\rangle}
\newcommand{\Avg}[1]{\left\langle#1\right\rangle}
\newcommand{\abs}[1]{\left|#1\right|}
\newcommand{\bk}[1]{\left(#1\right)}
\newcommand{\Bk}[1]{\left[#1\right]}
\newcommand{\BK}[1]{\left\{#1\right\}}
\newcommand{\bko}[1]{\lbrack #1 \rparen}
\newcommand{\expect}{\mathbb E}
\newcommand{\var}{\mathbb V}

\newcommand{\floor}[1]{\left\lfloor#1\right\rfloor}
\newcommand{\ceil}[1]{\left\lceil#1\right\rceil}

\newcommand{\norm}[1]{\lVert #1 \rVert}

\newcommand{\Norm}[1]{\left\lVert #1 \right\rVert}
\newcommand{\opnorm}[1]{\norm{#1}_{\mathrm{op}}}

\newcommand{\fnorm}[1]{\norm{#1}_{\mathrm{HS}}}
\newcommand{\deff}{\delta_{\mathrm{eff}}}

\DeclareMathOperator{\spn}{span}
\DeclareMathOperator{\cspn}{\overline{span}}
\DeclareMathOperator{\trace}{tr}

\DeclareMathOperator{\supp}{supp}
\newcommand{\el}{u}
\newcommand{\elalt}{v}
\newcommand{\chol}{L}

\newcommand{\fig}[3]{
\begin{figure}[htbp!]
\centerline{\includegraphics[width=#1\textwidth]{#2}}
\caption{\label{#2}#3}
\end{figure}
}
\begin{document}

\title{Quantum limit to subdiffraction incoherent optical imaging.
II. A parametric-submodel approach}

\author{Mankei Tsang}
\email{mankei@nus.edu.sg}
\homepage{https://blog.nus.edu.sg/mankei/}
\affiliation{Department of Electrical and Computer Engineering,
  National University of Singapore, 4 Engineering Drive 3, Singapore
  117583}

\affiliation{Department of Physics, National University of Singapore,
  2 Science Drive 3, Singapore 117551}

\date{\today}


\begin{abstract}
  In a previous paper [M.~Tsang, Phys.~Rev.~A \textbf{99}, 012305
  (2019)], I proposed a quantum limit to the estimation of object
  moments in subdiffraction incoherent optical imaging. In this
  sequel, I prove the quantum limit rigorously by infinite-dimensional
  analysis.  A key to the proof is the choice of an unfavorable
  parametric submodel to give a bound for the semiparametric
  problem. By generalizing the quantum limit for a larger class of
  moments, I also prove that the measurement method of spatial-mode
  demultiplexing (SPADE) with just one or two modes is able to achieve
  the quantum limit. For comparison, I derive a classical bound for
  direct imaging using the parametric-submodel approach, which
  suggests that direct imaging is substantially inferior.
\end{abstract}

\maketitle

\section{Introduction}
No problem is more essential in optics than the resolution limit of
incoherent imaging \cite{villiers}. Its importance to astronomy
\cite{goodman_stat}, fluorescence microscopy \cite{pawley}, and
countless other imaging applications can hardly be overestimated. The
fact that diffraction and photon shot noise play dual roles in
limiting the ultimate resolution suggests that quantum information
theory sets the right foundation for the problem
\cite{helstrom,tsang19a}. That said, the mathematics of quantum
information is daunting, its application to imaging more so, and
progress has been mostly limited to toy examples, in which only a few
parameters or hypotheses about the object are assumed to be unknown
\cite{helstrom,tsang19a}.

In recent years, the surprising results concerning two point sources
in Ref.~\cite{tnl} have triggered renewed interest in the
quantum-information perspective \cite{tsang19a}, as well as research
efforts towards more general cases
\cite{yang16,tsang17,tsang18a,tsang19b,zhou19,tsang19,bonsma19,liang21a,dutton19,prasad20,bisketzi19,fiderer21,bearne21,bao21,pushkina21,matlin21,grace21}.
In terms of computing the quantum limits, some researchers attack the
case of a few point sources and many parameters with numerical methods
\cite{bisketzi19,fiderer21}, but the harsh computational demands mean
that alternative approaches are necessary to deal with more complex
objects and high-dimensional parameter spaces. In this regard,
Refs.~\cite{yang16,tsang17,tsang18a,tsang19,tsang19b,zhou19,bonsma19,liang21a}
are able to make progress by framing the problem as object-moment
estimation, while assuming little about the object distribution. The
prequel of this paper, in particular, proposes a quantum limit to
moment estimation in the form of a quantum Cram\'er-Rao bound
\cite{tsang19}.

Two mathematical issues arise in the moment estimation problem: the
infinite dimensionality of the parameter space, since an extended
object may depend on infinitely many scalar parameters, and the
infinite dimensionality of the quantum states, since an extended
object may excite infinitely many spatial modes. The prequel sweeps
these issues ``under the rug'' and relies on finite-dimensional
arguments.  The main goal of this paper is to prove the quantum limit
rigorously by infinite-dimensional analysis
\cite{rudin_baby,debnath05,dieudonne,flett,bogachev,reed_simon,davies,parthasarathy05,parthasarathy67,parthasarathy92,holevo11,holevo}.

The theory of quantum semiparametric estimation, recently proposed in
Ref.~\cite{tsang20}, is a key to the proof. Although the parameter
space is infinite-dimensional, the theory enables one to derive lower
error bounds by considering parametric submodels, each of which
depends on just a scalar parameter and is much easier to handle.

The second goal of this paper is to prove the quantum optimality of a
measurement method called spatial-mode demultiplexing (SPADE)
\cite{tnl,tsang19a} for unbiased moment estimation. Previous works are
fixated on the estimation of simple moments (in the form of
$\int x^\mu P(dx)$ for an object distribution $P$ and a positive
integer $\mu$). To estimate a simple moment without bias, measurements
of infinitely many spatial modes are needed, and it becomes difficult
to even prove that an unbiased estimator exists
\cite{tsang17,tsang18a,tsang19,tsang19b}. In practice, of course, only
a finite number of modes can be measured \cite{fabre20,boucher20}, so
the existing results do not reflect well on the optimality of SPADE in
practice. This paper generalizes the quantum limit for a larger class
of moments, so that, with just one or two modes, SPADE can still
achieve the quantum limit for the generalized moments it is naturally
measuring. There does not seem to be any compelling reason in practice
to prefer a simple moment over a generalized moment pretty close to
it, so there is, arguably, little loss of practical relevance and much
to gain in the rigor by generalizing the moments.

The final goal of this paper is to give a bound for generalized moment
estimation with direct imaging, thus proving the superiority of SPADE.
While a similar result has been proposed in
Refs.~\cite{tsang17,tsang18a,tsang19b}, it relies on a special
assumption about the point-spread function that is hard to check and
has been verified only for a Gaussian point-spread function. I attempt
to relax the assumption by appealing again to the parametric-submodel
approach. Although the result is still not as general as one would
like, it at least establishes conditions that are easier to check and
paves the way for further generalizations.

This paper is organized as follows. Section~\ref{sec_model} introduces
a quantum model of incoherent optical imaging. Section~\ref{sec_semi}
introduces the quantum semiparametric estimation theory.
Section~\ref{sec_submodel} proposes a parametric submodel for the
derivation of the quantum limit.  Section~\ref{sec_qbound} presents
the quantum limit. Section~\ref{sec_spade} proves that SPADE with one
or two modes can still achieve the quantum limit.
Section~\ref{sec_direct} gives a bound for direct imaging, and
Sec.~\ref{conclusion} is the conclusion. The appendices contain the
more technical proofs and remarks.

\section{\label{sec_model}Model}
Let $\mathcal H_0$ be a 1-dimensional (1D) Hilbert space for the
vacuum, $\mathcal H$ be a separable Hilbert space that models the
spatial modes of light, $\tau_0$ be the vacuum state on
$\mathcal H_0$, and $\tau$ be a one-photon state on $\mathcal H$.
Suppose that the state in each temporal mode can be modeled as
\begin{align}
\rho &= (1-\epsilon) \tau_0 \oplus \epsilon\tau,
\label{rho}
\end{align}
where $\oplus$ denotes the direct sum and $\epsilon$ is the one-photon
probability per temporal mode. With $M$ temporal modes, the state is
assumed to be the tensor power $\rho^{\otimes M}$, and the expected
photon number in all modes is
\begin{align}
N &\equiv M\epsilon.
\label{N}
\end{align}
The validity of this ``rare-photon'' model with $\epsilon \ll 1$ to
describe thermal light at optical frequencies is studied extensively
in Refs.~\cite{tnl,tsang19,tsang19a}. Taking the limit
$\epsilon \to 0$ while keeping $N$ fixed leads to a Poisson model
\cite{tsang21} that agrees with semiclassical optics
\cite{goodman_stat}, although it is not necessary to consider the
Poisson limit in the following.

Assume 1D imaging for simplicity.  Let the set of object-plane
coordinates be $\mathbb R$ and $\Sigma$ be the Borel sigma-algebra of
$\mathbb R$ \cite{parthasarathy05}. For an object that emits spatially
incoherent light and is imaged with a diffraction-limited system,
$\tau$ can be modeled as \cite{tnl,tsang19}
\begin{align}
\tau(P) &= \int  e^{-i\hat kx}\ket{\psi}\bra{\psi} e^{i\hat kx}P(dx),
\label{tau}
\end{align}
where $P:\Sigma \to [0,1]$ is a probability measure that
models the object distribution normalized by the total brightness,
$\ket{\psi} \in \mathcal H$ with $\braket{\psi|\psi} = 1$ models the
coherent point-spread function of the imaging system, and $\hat k$ is
a self-adjoint operator on $\mathcal H$ for the optical spatial
frequency. I call any self-adjoint operator an observable in the
following. Figure~\ref{imaging_setup} illustrates the imaging system.

\fig{0.48}{imaging_setup}{A schematic of an imaging system.  See the
  main text for the definitions of the symbols.}

To work with the probability space
$(\mathbb R,\Sigma,P)$, it will be useful to define an
inner product, weighted by $P$, between two real functions
$\el,\elalt:\mathbb R \to \mathbb R$ as
\begin{align}
\Avg{\el,\elalt}_P &\equiv \int   \el(x)\elalt(x)P(dx),
\label{inner_P}
\end{align}
the corresponding norm as
\begin{align}
\norm{\el}_P &\equiv \sqrt{\avg{\el,\el}_P},
\end{align}
and the resulting real Hilbert space as \cite{parthasarathy05}
\begin{align}
L_2(P) &\equiv \BK{\el: \norm{\el}_P < \infty}.
\end{align}
I assume $L_2(P)$ to be separable in the following.

$\hat k$ and $\ket{\psi}$ lead to another probability space.  Let the
spectral representation of $\hat k$ be \cite{reed_simon}
\begin{align}
\hat k &= \int k E(dk),
\label{k_spectral}
\end{align}
where $E$ is a projection-valued measure on $(\mathbb
R,\Sigma)$. Define the spatial-frequency measure with respect to
$\hat k$ and $\ket{\psi}$ as
\begin{align}
Q(\cdot) &\equiv \bra{\psi}E(\cdot)\ket{\psi}.
\label{Q}
\end{align}
For example, if $\hat k$ is a continuous variable, then
$E(dk) = \ket{k}\bra{k} dk$ with $\braket{k|k'} = \delta(k-k')$ in the
Dirac notations, and $Q(dk) = |\braket{k|\psi}|^2 dk$, where
$\braket{k|\psi}$ is the optical transfer function of the imaging
system. 

In quantum information theory \cite{hayashi,demkowicz15}, it is often
useful to find a purification of $\tau$ in a larger Hilbert space
$\mathcal H \otimes \mathcal H'$, such that
\begin{align}
\tau &= \trace' \Psi,
&
\Psi &\equiv \ket\Psi\bra\Psi,
&
\ket{\Psi} &\in \mathcal H\otimes\mathcal H',
\label{pur}
\end{align}
where $\trace'$ denotes the partial trace over $\mathcal H'$. A
natural choice is to take $\mathcal H' = L_2(P)$ using the
representation
\begin{align}
1 \in L_2(P) &\leftrightarrow \ket{\phi} \in \mathcal H',
\\
\el(x) \in L_2(P) &\leftrightarrow \el(\hat x)\ket{\phi} \in \mathcal H',
\end{align}
where $\hat x$ is called the canonical multiplication operator, which
is self-adjoint \cite[Exercise~12.7(iii)]{parthasarathy92}, and
$\ket{\phi}$ is called a cyclic vector, satisfying
$\braket{\phi|\phi} = 1$ \cite{reed_simon}. To adhere to physics
terminology, I call $\ket{\phi}$ a purification of $P$. The
purification of $\tau$ then becomes
\begin{align}
\ket{\Psi} &= e^{-i\hat k\otimes \hat x}\ket{\psi}\otimes \ket{\phi}.
\label{natural_pur}
\end{align}
Let the spectral representation of $\hat x$ be
\begin{align}
\hat x &= \int x E'(dx),
\end{align}
where $E'$ is a projection-valued measure on
$(\mathbb R,\Sigma)$. Then
\begin{align}
P(\cdot) &= \bra{\phi} E'(\cdot)\ket{\phi}.
\label{spectral}
\end{align}
For example, if $P$ can be expressed in terms of a density $f(x)$ with
respect to the Lebesgue measure, then $\ket{\phi}$ should satisfy, in
the Dirac notations,
$P(dx) = f(x) dx = \bra{\phi}E'(dx) \ket{\phi} =
\abs{\braket{x|\phi}}^2 dx$. In other words, given an $f$, the
purification should have a wavefunction that satisfies
$|\braket{x|\phi}| = \sqrt{f(x)}$.

$\tau$ can be purified in other ways by applying isometries on
$\mathcal H'$ to $\ket{\Psi}$. Let $U:\mathcal H' \to \mathcal H''$ be
an isometry that satisfies $U^\dagger U = I_{\mathcal H'}$,
$I_{\mathcal H'}$ being the identity operator on $\mathcal H'$.  Then
an alternative purification is
\begin{align}
\ket{\Phi} &= I_{\mathcal H}\otimes U\ket{\Psi},
&
\tau &= \trace'' \Phi,
&
\Phi &\equiv \ket{\Phi}\bra{\Phi}.
\label{alt_pur}
\end{align}
A fruitful choice made in Ref.~\cite{tsang19} is as follows. 

\begin{lemma}
\label{lem_pur}
Consider the state given by Eq.~(\ref{tau}). Assume that the support
of $P$ \cite{parthasarathy67}, denoted by $\supp P$, is infinite but
bounded, viz.,
\begin{align}
\#\supp P &= \infty,
\\
\Delta \equiv \sup_{x \in \supp P} |x| &\in (0,\infty).
\label{Delta}
\end{align}
Assume also that $\hat k$ is bounded.  Then a purification in
$\mathcal H\otimes\mathcal H''$ is given by
\begin{align}
\ket{\Phi} &= 
\sum_{p=0}^\infty  \sum_{n=0}^p  \frac{(-i\hat k)^p}{p!}\chol_{pn}
 \ket{\psi}\otimes\ket{n},
\label{Phi}
\end{align}
where $\{\ket{n}:n\in\mathbb N_0\}$ is an orthonormal sequence in
$\mathcal H''$ and $\chol$ is the lower-triangular matrix obtained by
applying the Cholesky factorization algorithm
\cite[Algorithm~A.1]{sarkka} to the Hankel matrix
\begin{align}
H_{pq} &\equiv \Avg{x^p,x^q}_P,
\quad
p,q \in \mathbb N_0.
\label{hankel}
\end{align}
The infinite series in Eq.~(\ref{Phi}) converges strongly
\cite[Definition~3.3.9]{debnath05}.
\end{lemma}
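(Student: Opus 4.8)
The plan is to begin from the natural purification $\ket{\Psi}=e^{-i\hat k\otimes\hat x}\ket\psi\otimes\ket\phi$ of Eq.~(\ref{natural_pur}), expand the unitary as a norm-convergent power series in the moment vectors $\hat x^p\ket\phi$, and recognize the Cholesky matrix $\chol$ as the coefficient matrix of a Gram--Schmidt orthogonalization of those vectors. The first ingredient is that $\hat x$ is a \emph{bounded} operator on $\mathcal H'=L_2(P)$ with $\opnorm{\hat x}\le\Delta$: since $\ket\phi$ is cyclic for $\hat x$ and $P(\cdot)=\bra\phi E'(\cdot)\ket\phi$ by Eq.~(\ref{spectral}), any Borel set $S$ with $P(S)=0$ has $\norm{E'(S)\ket\phi}^2=\bra\phi E'(S)\ket\phi=0$, hence $E'(S)\hat x^p\ket\phi=\hat x^pE'(S)\ket\phi=0$ for every $p$, and cyclicity forces $E'(S)=0$; thus $E'$ is concentrated on $\supp P\subseteq[-\Delta,\Delta]$. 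Combined with the hypothesis that $\hat k$ is bounded, $\hat k\otimes\hat x$ is bounded and $e^{-i\hat k\otimes\hat x}=\sum_{p=0}^\infty\frac{(-i)^p}{p!}\hat k^p\otimes\hat x^p$ converges in operator norm, dominated by $e^{\opnorm{\hat k}\Delta}$. Applying it to $\ket\psi\otimes\ket\phi$ yields the norm-convergent, hence strongly convergent, expansion $\ket\Psi=\sum_{p=0}^\infty\frac{(-i\hat k)^p}{p!}\ket\psi\otimes\hat x^p\ket\phi$.

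Next I would handle the Cholesky factorization. The Gram matrix of $\{\hat x^p\ket\phi\}_{p\in\mathbb N_0}$ is $\bra\phi\hat x^{p+q}\ket\phi=\int x^{p+q}\,P(dx)=H_{pq}$, the Hankel matrix of Eq.~(\ref{hankel}). Because $\#\supp P=\infty$, a nonzero polynomial cannot vanish $P$-almost everywhere, so the monomials $1,x,x^2,\dots$ are linearly independent in $L_2(P)$; therefore every finite leading principal submatrix of $H$ is the Gram matrix of linearly independent vectors and is positive definite. Hence the Cholesky algorithm \cite[Algorithm~A.1]{sarkka}, whose $(p,n)$ entry is computed from a finite set of $H_{ij}$ and of previously computed $\chol$ entries, runs to completion with strictly positive diagonal and yields a lower-triangular $\chol$ with $H_{pq}=\sum_{n=0}^{\min(p,q)}\chol_{pn}\chol_{qn}$ for all $p,q$ --- a finite identity, so no series convergence is at stake here.

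I would then build the isometry. Take an orthonormal sequence $\{\ket n:n\in\mathbb N_0\}$ in a separable $\mathcal H''$ and define $U$ on the dense subspace $\spn\{\hat x^p\ket\phi\}$ by $U\hat x^p\ket\phi:=\sum_{n=0}^p\chol_{pn}\ket n$. This is well defined because $\{\hat x^p\ket\phi\}$ is linearly independent, and it preserves inner products since $\langle U\hat x^p\ket\phi,U\hat x^q\ket\phi\rangle=\sum_{n=0}^{\min(p,q)}\chol_{pn}\chol_{qn}=H_{pq}=\langle\hat x^p\ket\phi,\hat x^q\ket\phi\rangle$; it therefore extends uniquely to an isometry $U:\mathcal H'\to\mathcal H''$ with $U^\dagger U=I_{\mathcal H'}$. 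By Eq.~(\ref{alt_pur}), $\ket\Phi=(I_{\mathcal H}\otimes U)\ket\Psi$ is then a purification of $\tau$, and pushing the bounded operator $I_{\mathcal H}\otimes U$ through the strongly convergent series for $\ket\Psi$ gives $\ket\Phi=\sum_{p=0}^\infty\sum_{n=0}^p\frac{(-i\hat k)^p}{p!}\chol_{pn}\ket\psi\otimes\ket n$, still strongly convergent, which is Eq.~(\ref{Phi}).

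The part I expect to require the most care is the infinite-dimensional bookkeeping rather than any single deep step: making the Cholesky algorithm unambiguous on the infinite Hankel matrix and checking that the factorization holds section by section; verifying that $U$ is a genuine isometry on all of $\mathcal H'$, not merely on a dense subspace, so that $I_{\mathcal H}\otimes U$ is bounded and commutes with the infinite sum; and being explicit about which mode of convergence (operator-norm, strong) is used at each step. The hypotheses do the essential work here: $\#\supp P=\infty$ keeps the moment vectors linearly independent so that none of the constructions degenerates, while the boundedness of $\hat k$ and of $\supp P$ is exactly what upgrades the formal exponential series to a norm-convergent one.
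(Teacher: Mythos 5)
Your proposal is correct and follows essentially the same route as the paper: bound $\hat x$ by $\Delta$, expand $e^{-i\hat k\otimes\hat x}$ as a norm-convergent power series, and push an isometry built from the Cholesky factor of the Hankel matrix through the series. The only cosmetic difference is that the paper constructs $U$ via the orthonormal polynomials $\ket{\varphi_n}=a_n(\hat x)\ket{\phi}$, citing that they form an orthonormal basis of $L_2(P)$ for an infinite, bounded support --- which is exactly the density of $\spn\{\hat x^p\ket{\phi}\}$ that you assert without justification --- whereas you define the same $U$ directly on the moment vectors.
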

The proof is deferred to Appendix~\ref{app_lem_pur}.


The assumptions about $P$ ensure that $H$ and $\chol$ are defined. The
bounded $\hat k$ is a technical assumption to ensure the convergence
of results.  Physically, the $\#\supp P = \infty$ assumption means
that the object consists of infinitely many point sources, or in
other words, it is modeled as an extended object
\cite{goodman_stat}. The bounded support simply means that the object
has a finite size, while a bounded $\hat k$ simply means that the
imaging system has a finite bandwidth.


\section{\label{sec_semi}Semiparametric estimation}
Before proceeding further with the imaging problem, I review the
quantum semiparametric theory proposed in Ref.~\cite{tsang20}, which
is necessary to deal with the infinite-dimensional parameter space of
the problem.

Let the statistical model of quantum states on a Hilbert space
$\mathcal H$ be $\mathbf G \equiv \{\rho(g):g \in \mathcal G\}$, where
$\mathcal G$ is a possibly infinite-dimensional parameter space, and
let the parameter of interest be a scalar
$\beta:\mathcal G \to \mathbb R$. Let $\rho \in \mathbf G$ be the true
state. Define an inner product, weighted by $\rho$, between two
observables $\el$ and $\elalt$ as
\begin{align}
\Avg{\el,\elalt}_\rho &\equiv \trace \bk{\el\circ \elalt}\rho,
\end{align}
where $\el\circ \elalt \equiv (\el\elalt+\elalt\el)/2$ denotes the Jordan
product.  The corresponding norm is
\begin{align}
\norm{\el}_\rho &\equiv \sqrt{\avg{\el,\el}_\rho}.
\end{align}
With respect to this inner product, define the real Hilbert space
$L_2(\rho)$ as the completion of the set $\mathcal B(\mathcal H)$ of
bounded observables \cite{holevo11,holevo}.  Within $L_2(\rho)$,
define a subspace of zero-mean observables as
\begin{align}
  \mathcal Z &\equiv \BK{\el \in L_2(\rho): \Avg{I_{\mathcal H},\el}_\rho = 0}.
\end{align}
Consider a 1D submodel, containing the true $\rho$, given by
\begin{align}
\BK{\sigma(\theta):\theta \in \mathcal R \subseteq\mathbb R,
\sigma(\theta_0) = \rho} \subseteq \mathbf G,
\end{align}
where $\mathcal R$ is an open interval containing $\theta_0$.  Let the
overdot denote the derivative with respect to the parameter that is
evaluated at the truth, such as
\begin{align}
\dot\sigma &\equiv
\left.\parti{\sigma(\theta)}{\theta}\right|_{\theta=\theta_0}.
\end{align}
Assume that the submodel is regular, as defined by the conditions that
$\sigma(\theta)$ is differentiable at the truth (such that
$\dot\sigma$ is trace-class \cite{holevo11}) and
$|\trace \el \dot\sigma| \le C \norm{\el}_\rho$ for all
$\el \in \mathcal B(\mathcal H)$ and some constant $C$. Then the
bounded linear functional $\trace \el \dot\sigma$ of $\el$ is
continuous with respect to the norm $\norm{\el}_\rho$
\cite[Theorem~1.5.7]{debnath05} and can be extended uniquely to be
defined on $L_2(\rho)$ \cite[Theorem~1.5.10]{debnath05}. By the Riesz
representation theorem \cite[Theorem~3.7.7]{debnath05}, there exists a
unique $S^\sigma \in L_2(\rho)$, called the score, such that
\begin{align}
\trace \el \dot\sigma &= \Avg{\el,S^\sigma}_\rho 
\quad \forall \el \in L_2(\rho).
\label{score}
\end{align}
This abstract definition of the score is due to Holevo
\cite{holevo11,holevo}. To put it another way, suppose that an
observable $\hat S^\sigma$, called a symmetric logarithmic derivative,
is a solution to the Lyapunov equation
\begin{align}
\dot\sigma &= \hat S^\sigma \circ \rho,
\end{align}
and $\norm{\hat S^\sigma}_\rho < \infty$.  Then
$\trace \el \dot\sigma = \avg{\el,\hat S^\sigma}_\rho$ for all
$\el \in \mathcal B(\mathcal H)$ by Eq.~(2.8.88) in
Ref.~\cite{holevo11}, and $\hat S^\sigma$ must be in the equivalence
class of $S^\sigma$ by the uniqueness of $S^\sigma$.  In other words,
a regular submodel is defined by having a finite Helstrom information
$\norm{\hat S^\sigma}_\rho^2 = \norm{S^\sigma}_\rho^2$. With all that
said, it is unimportant to make the distinction between the element
and its observables in the following.

Let $\mathcal S$ be the set of all regular 1D submodels of
$\mathbf G$. Let $\{S\}$ be the tangent set, defined as the set of the
scores of all such submodels, viz.,
\begin{align}
\BK{S} \equiv \BK{S^\sigma: \sigma \in \mathcal S} \subseteq \mathcal Z,
\end{align}
and let the tangent space $\mathcal T$ be the closed linear span of
the tangent set, viz.,
\begin{align}
\mathcal T \equiv \cspn\BK{S} \subseteq \mathcal Z.
\end{align}
Define also the set of influence observables with respect to $\beta$
as
\begin{align}
\mathcal D &\equiv 
\BK{\delta \in \mathcal Z: \Avg{\delta,S^\sigma}_\rho = \dot\beta^\sigma
\ \forall \sigma \in \mathcal S},
\end{align}
where $\dot\beta^\sigma$ is obtained by expressing $\beta$ as a
function of $\theta$ for the submodel $\{\sigma(\theta)\}$ and taking
the derivative. Let $\mathcal E$ be the positive operator-valued
measure that models a measurement. The mean-square error of an
unbiased estimator $\check\beta$ at the truth becomes
\begin{align}
\mathsf E &\equiv \int \Bk{\check\beta(\lambda)-\beta}^2 
\trace \mathcal E(d\lambda)\rho.
\end{align}
If $\mathcal D$ is not empty, the generalized Helstrom bound on
$\mathsf E$ for any measurement and any unbiased estimator is given by
\begin{align}
\mathsf E &\ge \tilde{\mathsf H} = \norm{\deff}_\rho^2,
\end{align}
where $\deff$ is the efficient influence given by the
orthogonal projection of any $\delta\in\mathcal D$ into $\mathcal T$,
denoted by
\begin{align}
\deff &\equiv \Pi(\delta|\mathcal T)
\quad
\forall \delta \in \mathcal D.
\end{align}
The result of the projection is unique, as $\mathcal D$ is the affine
subspace $\deff + \mathcal T^\perp$, where
$\mathcal T^\perp$ is the orthocomplement of $\mathcal T$ in
$\mathcal Z$. Figure~\ref{tangent_space} illustrates the essential
concepts from the geometric perspective.

\fig{0.4}{tangent_space}{A geometric picture of the family of states
  $\mathbf G$ as a manifold, the true state $\rho$ as a point, a 1D
  submodel $\sigma(\theta)$ as a line, its score $S^\sigma$ as a
  tangent vector, $\mathcal T$ as the tangent space, and the efficient
  influence $\deff$ as another vector, which is the
  gradient of $\beta$ in $\mathcal T$.}

As discussed in Ref.~\cite{tsang20}, the tangent space for the imaging
problem turns out to be nontrivial and a closed-form solution for the
generalized Helstrom bound is difficult to obtain, so I resort to
looser bounds in the following. A useful lemma, generalizing the
classical Lemma~25.19 in Ref.~\cite{vaart} and a similar result for
the finite-dimensional quantum case in Ref.~\cite{gross20}, is as
follows.
\begin{lemma}
\label{lem_submodel}
\begin{align}
\tilde{\mathsf H} &= \sup_{\el \in \spn\{S\}} \mathsf H(\el),
\label{supremum}
\\
\mathsf H(\el) &\equiv \norm{\Pi(\delta|\spn \el)}_\rho^2
\quad
\forall \delta \in \mathcal D,
\end{align}
where $\spn \el$ is the Hilbert space spanned by one element $\el$.  If
$\el \neq 0$,
\begin{align}
\mathsf H(\el) &= 
\frac{\avg{\delta,\el}_\rho^2}{\norm{\el}_\rho^2}.
\label{H_formula}
\end{align}
\end{lemma}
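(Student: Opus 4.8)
The plan is to reduce Eq.~(\ref{supremum}) to elementary Hilbert-space geometry in $L_2(\rho)$, using only the two structural facts already in hand: $\mathcal T = \cspn\{S\}$, and $\mathcal D = \deff + \mathcal T^\perp$ with $\deff = \Pi(\delta|\mathcal T)$. First I would check that $\mathsf H(\el)$ is well defined, i.e.\ does not depend on which $\delta \in \mathcal D$ one picks: for $\el \in \spn\{S\} \subseteq \mathcal T$ and any two $\delta,\delta'\in\mathcal D$, the difference $\delta-\delta'$ lies in $\mathcal T^\perp$, so $\avg{\delta,\el}_\rho=\avg{\delta',\el}_\rho$ and hence $\Pi(\delta|\spn\el)$ is unambiguous. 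Then Eq.~(\ref{H_formula}) follows from the explicit one-dimensional projection formula: for $\el\neq 0$, $\Pi(\delta|\spn\el)=(\avg{\delta,\el}_\rho/\norm{\el}_\rho^2)\,\el$, whose squared norm is $\avg{\delta,\el}_\rho^2/\norm{\el}_\rho^2$. The case $\el=0$ gives projection onto $\{0\}$ and $\mathsf H(0)=0$, which never influences the supremum.

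For the inequality $\sup_{\el\in\spn\{S\}}\mathsf H(\el)\le\tilde{\mathsf H}$, I would use that $\spn\el\subseteq\mathcal T$ whenever $\el\in\spn\{S\}$. Writing $\delta=\deff+r$ with $r\in\mathcal T^\perp$, the component $r$ projects to zero on $\spn\el$, so $\Pi(\delta|\spn\el)=\Pi(\deff|\spn\el)$; since orthogonal projections are contractions (Pythagoras), $\norm{\Pi(\deff|\spn\el)}_\rho\le\norm{\deff}_\rho$, i.e.\ $\mathsf H(\el)\le\norm{\deff}_\rho^2=\tilde{\mathsf H}$. For the matching lower bound I would exploit that $\spn\{S\}$ is dense in its closure $\mathcal T$, which contains $\deff$. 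If $\deff=0$ then $\tilde{\mathsf H}=0$ and there is nothing to prove; otherwise pick a sequence $\el_j\in\spn\{S\}$ with $\el_j\to\deff$ in $L_2(\rho)$, so that eventually $\el_j\neq 0$, $\norm{\el_j}_\rho^2\to\norm{\deff}_\rho^2\neq 0$, and (using $\el_j\in\mathcal T$ together with continuity of the inner product) $\avg{\delta,\el_j}_\rho=\avg{\deff,\el_j}_\rho\to\norm{\deff}_\rho^2$. Hence $\mathsf H(\el_j)=\avg{\delta,\el_j}_\rho^2/\norm{\el_j}_\rho^2\to\norm{\deff}_\rho^2=\tilde{\mathsf H}$, so the supremum is at least $\tilde{\mathsf H}$. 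Combining the two inequalities gives Eq.~(\ref{supremum}).

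The argument is essentially routine, and it parallels the classical Lemma~25.19 of Ref.~\cite{vaart} and the finite-dimensional quantum version of Ref.~\cite{gross20}. The one point deserving care is the genuinely infinite-dimensional subtlety: the supremum in Eq.~(\ref{supremum}) need not be attained by any single $\el\in\spn\{S\}$ — it is in general only approached along a sequence whose span-components converge to $\deff$ — so the lower bound must be an approximation argument rather than an explicit construction, and the statement is correctly phrased as a supremum, not a maximum. Everything else (the projection formula, the contraction property, the independence of $\delta\in\mathcal D$) is standard once the decomposition $\mathcal D=\deff+\mathcal T^\perp$ is invoked.
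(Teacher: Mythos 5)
Your proposal is correct and follows essentially the same route as the paper's proof in Appendix~\ref{app_lem_submodel}: the one-dimensional projection formula for Eq.~(\ref{H_formula}), the replacement of $\delta$ by $\deff$ together with the shrinking property of projections for the upper bound, and a density argument with a sequence $\el_j \to \deff$ in $\mathcal T = \cspn\{S\}$ for the matching lower bound. Your explicit check that $\mathsf H(\el)$ is independent of the choice of $\delta \in \mathcal D$ is a small addition the paper leaves implicit, but otherwise the two arguments coincide.
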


The proof is deferred to Appendix~\ref{app_lem_submodel}.

If the element $\el$ in $\spn\{S\}$ is the score of a 1D submodel, as is
often the case, $\mathsf H(\el)$ is the Helstrom bound and
$\avg{\delta,\el}_\rho = \dot\beta$ with respect to the
submodel. Lemma~\ref{lem_submodel} then implies that one may consider
only 1D submodels in the evaluation or bounding of
$\tilde{\mathsf H}$. To obtain a tight lower bound on
$\tilde{\mathsf H}$ via $\mathsf H(\el)$, the submodel should be made as
unfavorable to the estimation as possible.

\section{\label{sec_submodel}An unfavorable parametric submodel}
For the imaging problem, let
\begin{align}
\mathbf G &= \BK{\Bk{\rho(P)}^{\otimes M}: P \in \mathcal P},
\label{G}
  \\
\mathcal P &= 
\textrm{all probability measures on }(\mathbb R,\Sigma),
\label{set}
\end{align}
where $\rho(P)$ is given by Eqs.~(\ref{rho})--(\ref{tau}).
Equation~(\ref{set}) implies that no knowledge about the object is
assumed, other than the fact that it is spatially incoherent and the
expected total photon number $N$ is known. If $N$ is unknown, a
submodel with a fixed $N$ still gives a valid lower bound on
$\tilde{\mathsf H}$, by Lemma~\ref{lem_submodel}. Let the parameter of
interest be the linear functional
\begin{align}
\beta(P) &= \int  b(x)P(dx)
\label{beta}
\end{align}
for a given real function $b(x)$.  If $b(x)$ is a monomial, viz.,
\begin{align}
b(x) &= x^\mu, \quad \mu \in \mathbb N_1,
\label{monomial}
\end{align}
I call the $\beta$ a simple moment, as considered in previous works
\cite{tsang17,tsang18a,tsang19,tsang19b,zhou19,bonsma19,liang21a}. If
$\beta$ is not necessarily simple, I call it a generalized moment.

Let $P_0 \in \mathcal P$ be the true measure and consider the submodel
\begin{align}
P_\theta(\el) &= \int_{\el}  g_\theta(x)P_0(dx),
\quad
\el \in \Sigma,
\label{P_theta}
\\
g_\theta(x) &\equiv \frac{1 + \tanh[\theta S(x)]}
{\int \{1 + \tanh[\theta S(x)]\}P_0(dx)},
\label{g}
\\
\theta &\in (-c,c), \quad 0 < c < \infty,
\label{theta}
\end{align}
where the Radon-Nikodym derivative $g_\theta(x)$ given by
Eq.~(\ref{g}) is chosen for its convenient properties \cite{bickel93}
(see also Lemma~\ref{lem_g} in Appendix~\ref{app_thm_qbound}) and
$S(x)$ is assumed to be a function in $L_2(P_0)$ with zero mean at the
truth, viz.,
\begin{align}
\int  S(x) P_0(dx) &= \Avg{1,S}_{P_0} = 0.
\end{align}
Each $P_\theta$ is a valid probability measure, and
$\{P_\theta:\theta \in (-c,c)\}$ also contains the true $P_0$ at
$\theta = 0$, so $\{P_\theta\}$ is a valid submodel for the purpose of
Lemma~\ref{lem_submodel}.

It is straightforward to show that $S(x)$ is the classical score of
the submodel $\{P_\theta\}$.  The quantum score of the corresponding
submodel $\{\tau(P_\theta):\theta \in (-c,c)\}$ for each photon is the
pushforward of $S$ by the map $\tau$ \footnote{$\tau_*$ is called the
  score operator in the classical statistics literature
  \cite{bickel93,vaart}, but that term would cause confusion with the
  quantum operators so I do not use it.}, denoted by $\tau_* S$, to
borrow the terminology from differential geometry
\cite{lee03}. Figure~\ref{pushforward} illustrates the concept from
the geometric perspective.  Formally, the observables in the
equivalence class of $\tau_* S$ obey the Lyapunov equation
\begin{align}
(\tau_* S) \circ \tau(P_0) &= \dot\tau = \tau(\dot P) = \tau(S P_0)
\\
&= \int  e^{-i\hat k x}\ket{\psi}\bra{\psi}e^{i\hat k x} S(x) P_0(dx).
\end{align}
The Helstrom bound for the $M$-temporal-mode submodel
$\{[\rho(P_\theta)]^{\otimes M}:\theta \in (-c,c)\}$ becomes
\begin{align}
\mathsf H &= \frac{\dot\beta^2}{N\norm{\tau_* S}_{\tau(P_0)}^2},
\label{helstrom_submodel}
\\
\dot\beta &= \int  b(x) S(x) P_0(dx) = \Avg{b,S}_{P_0}.
\label{inner_beta}
\end{align}
By Lemma~\ref{lem_submodel}, Eq.~(\ref{helstrom_submodel}) is a lower
bound on the generalized Helstrom bound $\tilde{\mathsf H}$
for the semiparametric problem.

\fig{0.48}{pushforward}{$\tau$ is a map from a set of input
  probability measures to a set of output quantum states.  $\tau_*$ is
  the pushforward of each score from the input tangent space to the
  output tangent space.}

The prequel of this paper assumes a 1D model with $\theta = \beta$
being a simple moment of order $\mu$ while all the other moments are
fixed \cite{tsang19}. It justifies the quantum bound by appealing to
the inequalities
$\mathsf E \ge (J^{-1})_{\mu\mu} \ge 1/J_{\mu\mu} \ge 1/K_{\mu\mu}$
for the Fisher information matrix $J$ and the Helstrom information
matrix $K$. But it is unclear whether such a model is a valid
submodel, and whether the inequalities are justified for the
infinite-dimensional semiparametric model. 
  Reference~\cite{zhou19}, a related work that studies the classical
  Fisher information for the same moment estimation problem, shares
  the same issues.  Here I am able to alleviate these doubts by
explicitly constructing a valid submodel and appealing to
Lemma~\ref{lem_submodel}.

Note that multiplying $S$ by any nonzero constant does not change
$\mathsf H$, as $\tau_* S$ and $\avg{b,S}_{P_0}$ are both linear with
respect to $S$, so there is no loss of generality if $S$ is normalized
to
\begin{align}
\norm{S}_{P_0} &= 1.
\end{align}
To make the submodel bound $\mathsf H$ a tight lower bound on the
semiparametric $\tilde{\mathsf H}$, there are two heuristic
considerations in the choice of $S$:
\begin{enumerate}
\item The $\tau$ map should shrink its norm as much as possible.

\item $\dot\beta = \avg{b,S}_{P_0} \neq 0$.
\end{enumerate}
Since $\tau$ models the imaging process with a bandwidth limit,
physical intuition suggests that picking a highly oscillatory function
for $S$ may lead to significant norm shrinkage. A convenient choice in
this regard is an orthonormal polynomial $a_n(x)$ specified by
Lemma~\ref{lem_poly}, defined here with respect to the true measure
$P_0$.  It is already normalized, and each $a_n(x)$ is more
oscillatory for higher $n$, as each $a_n(x)$ has $n$ zeros within the
support of $P_0$ \cite{golub10}.

For the second consideration, suppose that $\beta$ is a simple moment
with respect to the $b(x)$ given by Eq.~(\ref{monomial}). Then the
highest $n$ for which $\avg{a_n,b}_{P_0} \neq 0$ is $n = \mu$ by
Lemma~\ref{lem_poly}. Thus, a promising choice is
\begin{align}
S(x) &= a_\mu(x).
\label{a_mu}
\end{align}

\section{\label{sec_qbound}Purification bounds}
A closed-form solution for the Helstrom information of the mixed-state
submodel remains difficult to obtain. A standard technique in quantum
metrology is to bound it using a purified model and the monotonicity
of Helstrom information \cite{escher,demkowicz15,hayashi}.
\begin{lemma}
\label{lem_monotone}
Let $\{\omega_\theta:\theta \in \mathcal R \subseteq \mathbb R\}$ be a
regular model of states on $\mathcal H\otimes\mathcal H'$ and
$\{\tau_\theta = \trace'\omega_\theta:\theta \in \mathcal R\}$ be the
model on $\mathcal H$ generated by the partial trace. Then
$\{\tau_\theta\}$ is also regular, and
\begin{align}
\norm{\trace'_* S}_{\trace'\omega}^2 &\le \norm{S}_\omega^2,
\end{align}
where $S$ is the score of $\{\omega_\theta\}$ at the true $\omega$ and
$\trace'_* S$ is the score of $\{\tau_\theta\}$ at $\trace'\omega$.
\end{lemma}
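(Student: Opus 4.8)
The plan is to recognize the partial trace $\trace'$ as a quantum channel and to dualize it: its Heisenberg-picture adjoint is the embedding $\iota:A\mapsto A\otimes I_{\mathcal H'}$, which sends bounded observables on $\mathcal H$ to bounded observables on $\mathcal H\otimes\mathcal H'$. The crux is that $\iota$ is an $L_2$ isometry. Indeed, for $A\in\mathcal B(\mathcal H)$ the partial-trace identity $\trace[(B\otimes I_{\mathcal H'})X]=\trace[B\,\trace'X]$ with $X=\omega$ and $B=A^2$ gives
\begin{align}
\norm{A\otimes I_{\mathcal H'}}_\omega^2
= \trace[(A^2\otimes I_{\mathcal H'})\omega]
= \trace[A^2\,\trace'\omega]
= \norm{A}_{\trace'\omega}^2 .
\end{align}
Since $\mathcal B(\mathcal H)$ is dense in $L_2(\trace'\omega)$, $\iota$ extends uniquely to an isometry, still denoted $\iota$, from $L_2(\trace'\omega)$ into $L_2(\omega)$; then $\iota^\dagger\iota=I$, so $\iota\iota^\dagger$ is an orthogonal projection on $L_2(\omega)$ and $\iota^\dagger$ is norm-nonincreasing.

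Next I would check that the reduced model is regular. Because $\trace'$ is linear and trace-norm contractive, differentiability of $\theta\mapsto\omega_\theta$ at the truth transfers to $\tau_\theta=\trace'\omega_\theta$, with $\dot\tau=\trace'\dot\omega$ again trace-class. For any $A\in\mathcal B(\mathcal H)$,
\begin{align}
\trace A\dot\tau
= \trace[(A\otimes I_{\mathcal H'})\dot\omega]
= \Avg{A\otimes I_{\mathcal H'},S}_\omega
= \Avg{\iota A,S}_\omega ,
\end{align}
the middle step being the Riesz characterization~(\ref{score}) of the score $S$ of the \emph{given} regular model $\{\omega_\theta\}$. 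By Cauchy--Schwarz and the isometry, $\abs{\trace A\dot\tau}\le\norm{\iota A}_\omega\norm{S}_\omega=\norm{A}_{\trace'\omega}\norm{S}_\omega$, so $A\mapsto\trace A\dot\tau$ is a bounded functional with constant $\norm{S}_\omega<\infty$. Hence $\{\tau_\theta\}$ is regular and, by the Riesz theorem, its score $\trace'_*S\in L_2(\trace'\omega)$ exists and is the unique element with $\Avg{A,\trace'_*S}_{\trace'\omega}=\trace A\dot\tau$ for all $A\in L_2(\trace'\omega)$.

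Comparing the last two displays, $\Avg{A,\trace'_*S}_{\trace'\omega}=\Avg{\iota A,S}_\omega=\Avg{A,\iota^\dagger S}_{\trace'\omega}$ for all $A$ in the dense set $\mathcal B(\mathcal H)$, hence for all $A\in L_2(\trace'\omega)$ by continuity; uniqueness of the Riesz representative forces $\trace'_*S=\iota^\dagger S$. Therefore
\begin{align}
\norm{\trace'_*S}_{\trace'\omega}^2
= \Avg{\iota^\dagger S,\iota^\dagger S}_{\trace'\omega}
= \Avg{\iota\iota^\dagger S,S}_\omega
\le \norm{S}_\omega^2 ,
\end{align}
since $\iota\iota^\dagger$ is an orthogonal projection; this is the asserted monotonicity of the Helstrom information under the partial trace.

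The main obstacle is not the algebra above but the infinite-dimensional bookkeeping that justifies it: confirming that $A\otimes I_{\mathcal H'}$ is a bona fide bounded observable and that the duality $\trace[(B\otimes I_{\mathcal H'})X]=\trace[B\,\trace'X]$ holds for trace-class $X$ on a possibly infinite-dimensional $\mathcal H'$; verifying $\dot\tau=\trace'\dot\omega$, i.e.\ that $\trace'$ commutes with the trace-norm derivative; and noting that extending $\iota$ to all of $L_2(\trace'\omega)$ may leave $\mathcal B(\mathcal H\otimes\mathcal H')$, which is harmless because only density and continuity of the inner products are invoked. Once these standard facts are in place, the lemma reduces to the single observation that $\trace'_*S=\iota^\dagger S$ with $\iota^\dagger$ norm-nonincreasing; working through the abstract Riesz scores rather than symmetric logarithmic derivatives also sidesteps delicate questions of existence and uniqueness of SLDs in infinite dimensions.
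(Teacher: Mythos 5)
Your proposal is correct and follows essentially the same route as the paper: the paper's Lemma~\ref{lem_partial} is your statement that $A\mapsto A\otimes I_{\mathcal H'}$ is an $L_2$ isometry, its Lemma~\ref{lem_cond} constructs your $\iota^\dagger$ (there called $\pi$, with $\Pi=\iota\iota^\dagger$ shown to be an orthogonal projection), and the main proof likewise transfers differentiability through the bounded linear map $\trace'$, identifies $\trace'_*S=\pi(S)=\iota^\dagger S$ via the Riesz representation, and concludes by the shrinking property of the projection. The only cosmetic difference is that you invoke the general fact that $\iota^\dagger\iota=I$ makes $\iota\iota^\dagger$ a projection, whereas the paper verifies self-adjointness and idempotence of $\Pi$ directly.
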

The monotonicity is well established for finite-dimensional states
\cite{hayashi} and commonly assumed in the physics literature even for
infinite-dimensional states \cite{escher,demkowicz15}.  As I am unable
to find a rigorous proof for the general case in the literature, I
present one in Appendix~\ref{app_lem_monotone} for completeness.

For the submodel $\{\tau(P_\theta)\}$ in Sec.~\ref{sec_submodel}, the
natural purification given by Eq.~(\ref{natural_pur}) can be written
as
\begin{align}
\tau(P_\theta) &= \trace' \Psi_\theta,
\\
\ket{\Psi_\theta} &= 
e^{-i\hat k\otimes\hat x}\ket{\psi}\otimes \sqrt{g_\theta(\hat x)}
\ket{\phi_0},
\end{align}
where $\ket{\phi_0}$ is a purification of $P_0$. The score of
$\{\Psi_\theta:\theta \in (-c,c)\}$ is then $\hat S = S(\hat x)$, and
Lemma~\ref{lem_monotone} gives
\begin{align}
\norm{\tau_* S}_{\tau(P_0)}^2 &= 
\norm{\trace'_* \hat S}_{\trace'\Psi_0}^2 
\le \norm{\hat S}_{\Psi_0}^2 = \norm{S}_{P_0}^2 = 1,
\end{align}
which is a loose bound that does not depend on $\ket{\psi}$ of the
imaging system. I therefore turn to the alternative purification in
Lemma~\ref{lem_pur}.  Each $P_\theta$ given by Eq.~(\ref{P_theta})
satisfies the condition of infinite and bounded support for
Lemma~\ref{lem_pur} as long as the true $P_0$ satisfies it, as each
$P_\theta$ is dominated by $P_0$ and the $g_\theta(x)$ in
Eq.~(\ref{g}) is strictly positive. I can then use Lemma~\ref{lem_pur}
to write
\begin{align}
\tau(P_\theta) &= \trace'' \Phi_\theta,
\\
\ket{\Phi_\theta} &= \sum_{p=0}^\infty \sum_{n=0}^p \frac{(-i\hat k)^p}{p!}\chol_{pn}(\theta)
\ket{\psi}\otimes\ket{n}.
\label{Phi_theta}
\end{align}
where the parameter dependence comes from the Cholesky factor
$\chol(\theta)$ of the Hankel matrix
\begin{align}
  H_{pq}(\theta) &= \Avg{x^p,x^q}_{P_\theta} = 
                   \int  x^{p+q} P_\theta(dx).
\label{H_theta}
\end{align}
Note that the isometry used to generate this purification depends on
$P$ and therefore $\theta$ for the submodel, so the resulting Helstrom
information may differ from that using the original purification.

Let $S_\Phi$ be the score of $\{\Phi_\theta:\theta \in
(-c,c)\}$. Then, by Lemma~\ref{lem_monotone},
\begin{align}
\norm{\tau_* S}^2_{\tau(P_0)}  &=
\norm{\trace''_* S_\Phi}^2_{\trace'' \Phi_0} \le \norm{S_\Phi}^2_{\Phi_0}.
\label{monotone2}
\end{align}
Figure~\ref{purifications} illustrates the concepts introduced thus
far from the geometric perspective. While a closed-form solution for
$\norm{S_\Phi}^2_{\Phi_0}$ is still intractable, its scaling with the
true object size $\Delta$ defined by Eq.~(\ref{Delta}) in the
asymptotic limit $\Delta \to 0$ can be proved. Without loss of
generality, $\Delta$ can be measured in dimensionless ``Airy units''
relative to an effective bandwidth of the imaging system.
$\Delta \ll 1$ then defines the regime of subdiffraction objects.

\fig{0.48}{purifications}{A geometric picture that summarizes all the
  purifications, parametric submodels, scores, and maps that are
  involved in the proof of Theorem~\ref{thm_qbound}.  With the maps
  $\tau$, $\trace'$, and $\trace''$ all leading to the same model
  $\{\tau(P_\theta)\}$, the final score can be expressed as the
  pushforwards
  $\tau_* S = \trace'_* \hat S = \trace''_* S_\Phi$.}

In the following, the asymptotic notations $O[\el(\Delta)]$ (order at
most $\el(\Delta)$), $o[\el(\Delta)]$ (order smaller than
$\el(\Delta)$), $\Omega[\el(\Delta)]$ (order at least $\el(\Delta)$),
and $\Theta[\el(\Delta)]$ (order exactly $\el(\Delta)$)
will be used \cite{knuth76,miller06}.

\begin{theorem}
\label{thm_qbound}
Assume the semiparametric model given by
Eqs.~(\ref{rho})--(\ref{tau}), (\ref{G}), and (\ref{set}). Assume that
$\hat k$ is bounded. Assume further that $P_0$ is in the Szeg\H{o}
class \cite{widom66}, viz., $P_0$ has a density $f_0(x)$ with respect
to the Lebesgue measure on a compact interval $[c_1,c_2]$,
$c_1 < c_2$, such that
\begin{align}
\int_{c_1}^{c_2} \frac{\ln f_0(x)}{\sqrt{(c_2-x)(x-c_1)}} dx > -\infty.
\label{szego}
\end{align}
For the parameter of interest given by Eq.~(\ref{beta}) and
\begin{align}
b(x) &= x^\mu + o(\Delta^\mu),
\quad
\mu \in \mathbb N_1,
\label{b_gen}
\end{align}
the generalized Helstrom bound is
\begin{align}
\tilde{\mathsf H} &= \frac{\Omega(\Delta^{2\floor{\mu/2}})}{N}.
\label{qbound}
\end{align}
\end{theorem}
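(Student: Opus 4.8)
The plan is to apply Lemma~\ref{lem_submodel} to the parametric submodel $\{P_\theta\}$ of Sec.~\ref{sec_submodel} with the specific choice $S=a_\mu$ of Eq.~(\ref{a_mu}), so that $\tilde{\mathsf H}\ge\mathsf H$, where $\mathsf H$ is the Helstrom bound~(\ref{helstrom_submodel}) of that submodel. Since $\mu-\ceil{\mu/2}=\floor{\mu/2}$, it then suffices to establish two asymptotic scalings: a numerator bound $\dot\beta^2=\Theta(\Delta^{2\mu})$ and a denominator bound $\norm{\tau_* S}_{\tau(P_0)}^2=O(\Delta^{2\ceil{\mu/2}})$. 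Dividing the former by $N$ times the latter gives $\mathsf H=\Omega(\Delta^{2\floor{\mu/2}})/N$, which, being a lower bound on $\tilde{\mathsf H}$, is Eq.~(\ref{qbound}).

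The numerator is the easy part. By Eq.~(\ref{inner_beta}), $\dot\beta=\avg{b,a_\mu}_{P_0}$. Writing $b(x)=x^\mu+r(x)$ with $\norm{r}_{P_0}=o(\Delta^\mu)$, the orthogonality of $a_\mu$ to every polynomial of degree below $\mu$ (Lemma~\ref{lem_poly}) gives $\avg{x^\mu,a_\mu}_{P_0}=1/k_\mu$, where $k_\mu$ is the leading coefficient of $a_\mu$, while $\abs{\avg{r,a_\mu}_{P_0}}\le\norm{r}_{P_0}\norm{a_\mu}_{P_0}=o(\Delta^\mu)$ by Cauchy--Schwarz. The Szeg\H{o}-class hypothesis~(\ref{szego}) is what controls the orthogonal-polynomial asymptotics and, in particular, forces $k_\mu=\Theta(\Delta^{-\mu})$; hence $\dot\beta=1/k_\mu+o(\Delta^\mu)=\Theta(\Delta^\mu)$.

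The denominator is where the real work lies. Using the purification of Lemma~\ref{lem_pur}, Lemma~\ref{lem_monotone} together with Eq.~(\ref{monotone2}) gives $\norm{\tau_* S}_{\tau(P_0)}^2\le\norm{S_\Phi}_{\Phi_0}^2$, and since $\{\Phi_\theta\}$ is a normalized pure-state family, $\norm{S_\Phi}_{\Phi_0}^2=4\bk{\braket{\dot\Phi_0|\dot\Phi_0}-\abs{\braket{\Phi_0|\dot\Phi_0}}^2}\le4\braket{\dot\Phi_0|\dot\Phi_0}$. From Eq.~(\ref{Phi_theta}), $\ket{\dot\Phi_0}=\sum_{p,n}\frac{(-i\hat k)^p}{p!}\dot{\chol}_{pn}\ket\psi\otimes\ket n$, so orthonormality of $\{\ket n\}$ yields $\braket{\dot\Phi_0|\dot\Phi_0}=\sum_n\Norm{\sum_p\frac{(-i\hat k)^p}{p!}\dot{\chol}_{pn}\ket\psi}^2$. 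The structural fact that makes the estimate go through is that $\dot H_{pq}=\avg{x^{p+q},a_\mu}_{P_0}$ vanishes whenever $p+q<\mu$, once more by orthogonality of $a_\mu$. Differentiating $\chol(\theta)\chol(\theta)^T=H(\theta)$ at the truth and feeding in the $\Delta$-scalings (at $P_0$) $\abs{\chol_{pm}}\le\Delta^p$ and $\abs{(\chol^{-1})_{pm}}=O(\Delta^{-m})$, both governed by the Szeg\H{o} hypothesis, a linear-algebra computation shows $\dot{\chol}_{pn}=0$ for $p+n<\mu$ and $\dot{\chol}_{pn}=O(\Delta^p)$ otherwise. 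With $\hat k$ bounded, $\opnorm{\hat k}\le K$ say, the $\ket n$-component of $\ket{\dot\Phi_0}$ then has norm $O\bk{\sum_{p\ge\max(n,\mu-n)}K^p\Delta^p/p!}=O(\Delta^{\max(n,\mu-n)})$; summing the squares over $n\in\mathbb N_0$ and using $\min_n\max(n,\mu-n)=\ceil{\mu/2}$, we obtain $\braket{\dot\Phi_0|\dot\Phi_0}=O(\Delta^{2\ceil{\mu/2}})$, as needed.

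I expect the main obstacle to be the third step, and within it the rigorous control of the Cholesky factor: verifying that $\chol(\theta)$, its inverse, and its $\theta$-derivative at $P_0$ genuinely obey the two-sided $\Delta$-scalings claimed above, uniformly over the range of index pairs $(p,n)$ that contribute, and that the series in $p$ converge and are dominated by their lowest-order terms as $\Delta\to0$ so that the (possibly $p$-dependent) constants in the $O(\Delta^p)$ estimates do not spoil the final scaling. This is exactly where the Szeg\H{o}-class assumption earns its keep, through the classical asymptotics of orthogonal polynomials, and where most of the technical-appendix effort will go.
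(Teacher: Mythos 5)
Your proposal follows essentially the same route as the paper: the same $\tanh$-type submodel with score $S=a_\mu$, the same purification from Lemma~\ref{lem_pur} combined with monotonicity and the Fubini--Study bound $\norm{S_\Phi}_{\Phi_0}^2\le 4\braket{\dot\Phi|\dot\Phi}$, the same vanishing pattern of $\dot H$ and hence of $\dot\chol_{pn}$ (your condition $p+n\ge\mu$ together with $p\ge n$ is equivalent in effect to the paper's $p\ge\ceil{\mu/2}$), and the same $\dot\beta=\Theta(\Delta^\mu)$ via $\avg{x^\mu,a_\mu}_{P_0}=\chol_{\mu\mu}$ plus Cauchy--Schwarz. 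You also correctly locate the genuine technical burden --- uniform control of the Cholesky-factor derivatives and convergence of the series in $p$ --- which is exactly where the paper invokes the Szeg\H{o}-class assumption through the Widom--Wilf asymptotics of $\lambda_{\mathrm{min}}[G_{(p)}(0)]$ (note that the $\Theta(\Delta^{-\mu})$ scaling of the leading coefficient itself needs only the fixed-$R_0$ normalization, not Szeg\H{o}).
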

The proof is deferred to Appendix~\ref{app_thm_qbound}.

While Theorem~\ref{thm_qbound} appears to be identical to the main
result in Ref.~\cite{tsang19}, the rigor has been much improved
here. Most notably,
\begin{enumerate}
\item Lemma~\ref{lem_submodel} ensures that the 1D-submodel bound is
  valid for an infinite-dimensional model, avoiding the use of the
  questionable parametric model and matrix algebra in
Refs.~\cite{zhou19,tsang19}.

\item Lemma~\ref{lem_monotone} ensures that the monotonicity of the
  Helstrom information remains true for infinite-dimensional states.

\item The conditions for Theorem~\ref{thm_qbound} have been rigorously
  established, avoiding the questionable algebra of possibly unbounded
  operators in Ref.~\cite{tsang19}.
\end{enumerate}
The important physical implication of Theorem~\ref{thm_qbound} is that
the mean-square error $\mathsf E$ of any unbiased estimator must
decrease with $\Delta$ more slowly than the signal
\begin{align}
\beta^2 = O(\Delta^{2\mu}),
\end{align}
so the signal-to-noise ratio is
\begin{align}
\frac{\beta^2}{\mathsf E} &= N O(\Delta^{2\ceil{\mu/2}}),
\end{align}
and the moments of smaller subdiffraction objects are harder to
estimate, especially for higher orders.


Notice that Eq.~(\ref{b_gen}) generalizes the theorem for a larger
class of parameters, beyond the simple moments considered in previous
works \cite{tsang17,tsang18a,tsang19,tsang19b,zhou19}. I call a moment
associated with Eq.~(\ref{b_gen}) a generalized moment of order $\mu$.
The theorem shows that, regardless of the $o(\Delta^\mu)$ terms, the
quantum limits for generalized moments of the same order have the same
leading-order term.  This fact will be useful in the proof of the
optimality of SPADE in Sec.~\ref{sec_spade}.

The formalism here may also be able to deal with a $\beta(P)$ that is
nonlinear with respect to $P$, such as the entropy, in which case
$b(x)$ should be replaced by a gradient of $\beta(P)$ in the
$L_2(P_0)$ space, although this case is outside the scope of the
paper.

\section{\label{sec_spade}Spatial-mode demultiplexing with one or two
  modes}
Suppose that the spatial-frequency measure $Q$, defined in
Eq.~(\ref{Q}) with respect to $\hat k$ and $\ket{\psi}$ of the imaging
system, has an infinite and bounded support. Then the orthonormal
sequence
\begin{align}
\BK{\ket{\psi_n} \equiv (-i)^n \tilde a_n(\hat k)\ket{\psi}: n \in \mathbb N_0},
\end{align}
defined in terms of the orthonormal polynomials $\{\tilde a_n\}$ for
the measure $Q$ (as per Lemma~\ref{lem_poly}), can be used to
construct measurements of the light. The set is called the
point-spread-function-adapted (PAD) modes \cite{rehacek17,tsang18a},
generalizing the Hermite-Gaussian modes for a Gaussian $Q$.

It is interesting to note that the orthonormal polynomials $\{a_n\}$
for $P$ play central roles in previous sections, enabling the
purification given by Lemma~\ref{lem_pur} (as shown in
Appendix~\ref{app_lem_pur}) and also providing the score for the
submodel used in Theorem~\ref{thm_qbound}. Following similar steps, it
can be shown that the probability of each photon being projected into
a PAD mode is
\begin{align}
q_n(P) &\equiv \bra{\psi_n} \tau(P)\ket{\psi_n}
= \int  |C_n(x)|^2P(dx),
\\
C_n(x) &\equiv \bra{\psi_n} e^{-i\hat kx}\ket{\psi}
= \sum_{p=0}^\infty \frac{(-ix)^p i^n}{p!}\tilde\chol_{pn},
\label{C_tilde}
\end{align}
where $\tilde\chol_{pn}$ is now the Cholesky factor of the Hankel
matrix with respect to $Q$. A property of $\tilde\chol_{pn}$ is that
it is zero for $p < n$ and nonzero for $p = n$ (as per
Lemma~\ref{lem_poly}), so
\begin{align}
C_n(x) &= \frac{\tilde\chol_{nn}}{n!}x^n + O(\Delta^{n+1}),
\label{C_tilde2}
\\
q_n(P) &= r_n \beta_{2n}(P),
\quad
r_n \equiv \frac{\tilde\chol_{nn}^2}{n!^2},
\\
\beta_{2n}(P) &= \int \Bk{x^{2n} + O(\Delta^{2n+1})}P(dx).
\end{align}
In other words, each probability $q_n$ is proportional to a certain
generalized object moment $\beta_{2n}$ of even order $\mu = 2n$.

If $Q$ is Gaussian, say,
\begin{align}
Q(dk) &= \sqrt{\frac{2}{\pi}} e^{-2k^2} dk,
\end{align}
then its support is unbounded, but the preceding discussion still
holds, as Eq.~(\ref{C_tilde}) has the closed-form solution
\cite{tsang17}
\begin{align}
C_n(x) &= e^{-x^2/8} \frac{(x/2)^n}{\sqrt{n!}}.
\end{align}
There may exist more general conditions on $Q$ for the results in this
section to be valid, but such a generalization does not seem to be
interesting from the physics perspective and is therefore not pursued
in this work.

Let $\{\mathcal N_n: n \in \mathbb N_0\}$ be the integrated photon counts from
the PAD-mode projections over the $M$ temporal modes. The
expected value of each count is
\begin{align}
\expect\bk{\mathcal N_n} &= M\epsilon q_n = N r_n \beta_{2n}.
\label{Qn_mean}
\end{align}
Assuming a known $N$ (the theory for an unknown $N$ is similar
\cite{tsang18a,tsang19b}), an unbiased estimator of $\beta_{2n}(P)$ is
therefore
\begin{align}
\check \beta_{2n} &= \frac{\mathcal N_n}{r_n N}.
\label{estimator}
\end{align}

To estimate an odd generalized moment, consider the so-called
interferometric-PAD (iPAD) modes \cite{tsang17,tsang18a}
\begin{align}
\ket{\psi_n^+} &\equiv \frac{\ket{\psi_n}+\ket{\psi_{n+1}}}{\sqrt{2}},
&
\ket{\psi_n^-} &\equiv \frac{\ket{\psi_n}-\ket{\psi_{n+1}}}{\sqrt{2}}.
\end{align}
Let the integrated photon counts resulting from projections in this
pair of modes be $\mathcal N_n^+$ and $\mathcal N_n^-$. Proceeding in
the same way as before, it can be shown that
\begin{align}
\expect(\mathcal N_n^+) &= \frac{N}{2}
\Bk{r_n \beta_{2n} + s_n \beta_{2n+1} + O(\Delta^{2n+2})},
\label{Qn1_mean}
\\
\expect(\mathcal N_n^-) &= \frac{N}{2}
\Bk{r_n \beta_{2n} - s_n \beta_{2n+1} + O(\Delta^{2n+2})},
\label{Qn2_mean}
\end{align}
where $s_n \equiv 2\tilde\chol_{nn}\tilde\chol_{n+1\,n+1}/[n!(n+1)!]$ 
and
\begin{align}
\beta_{2n+1}(P) &= \int \Bk{x^{2n+1} + O(\Delta^{2n+2})}P(dx)
\end{align}
is a certain odd generalized moment.  An unbiased estimator of
$\beta_{2n+1}(P)$ is therefore
\begin{align}
\check\beta_{2n+1} &= \frac{\mathcal N_n^+-\mathcal N_n^-}{s_n N}.
\label{estimator_odd}
\end{align}
See Refs.~\cite{tsang17,tsang18a} for further details about how bases
may be constructed from the PAD and iPAD modes.

The following proposition summarizes the SPADE performance.

\begin{proposition}
\label{prop_spade}
Assume that the spatial-frequency measure $Q$ either has an infinite
and bounded support or is Gaussian. Then, projections into the PAD
modes enable unbiased estimation of a certain set of even generalized
moments $\{\beta_{2n}: n \in \mathbb N_0\}$, with the variance of each
estimator $\check\beta_{2n}$ given by
\begin{align}
\var\bk{\check\beta_{2n}} &= \frac{O(\Delta^{2n})}{N}\Bk{1 + O(\epsilon)},
\label{V2n}
\end{align}
while projections into the iPAD modes enable unbiased estimation of a
certain set of odd generalized moments
$\{\beta_{2n+1}: n \in \mathbb N_0\}$, with the variance of each
estimator $\check\beta_{2n+1}$ given by
\begin{align}
\var\bk{\check\beta_{2n+1}} &= \frac{O(\Delta^{2n})}{N} \Bk{1 + O(\epsilon)}.
\label{V2n1}
\end{align}
The variances achieve the quantum-limited $\Delta$ scalings given by
Eq.~(\ref{qbound}) in Theorem~\ref{thm_qbound}.
\end{proposition}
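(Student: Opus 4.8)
The plan is to reduce the whole statement to elementary binomial statistics, using that the state is the tensor power $\rho^{\otimes M}$ and that SPADE is carried out independently in each of the $M$ temporal modes. In one temporal mode the POVM element $0\oplus\ket{\psi_n}\bra{\psi_n}$ on $\mathcal H_0\oplus\mathcal H$ fires with probability $\trace[(0\oplus\ket{\psi_n}\bra{\psi_n})\rho]=\epsilon q_n$, so $\mathcal N_n$ is a sum of $M$ i.i.d.\ Bernoulli$(\epsilon q_n)$ variables. Hence $\expect(\mathcal N_n)=M\epsilon q_n=Nr_n\beta_{2n}$, which is Eq.~(\ref{Qn_mean}) and makes $\check\beta_{2n}$ unbiased by the very definition of $\beta_{2n}$, while $\var(\mathcal N_n)=M\epsilon q_n(1-\epsilon q_n)$ gives $\var(\check\beta_{2n})=\beta_{2n}(1-\epsilon q_n)/(r_nN)$. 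Since $|x|\le\Delta$ on $\supp P$, the expansion $\beta_{2n}(P)=\int[x^{2n}+O(\Delta^{2n+1})]P(dx)$ forces $\beta_{2n}=O(\Delta^{2n})$; because $r_n=\tilde\chol_{nn}^2/n!^2$ is a strictly positive, $\Delta$-independent constant of the imaging system (nonzero because $\tilde\chol_{nn}\neq0$ by Lemma~\ref{lem_poly}) and $0\le\epsilon q_n\le\epsilon$, this is Eq.~(\ref{V2n}).

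For the iPAD estimator the only new ingredient is the joint law of $(\mathcal N_n^+,\mathcal N_n^-)$. In one temporal mode there is at most one photon and $\ket{\psi_n^+}\perp\ket{\psi_n^-}$, so the two detection events are mutually exclusive and the per-mode covariance of the corresponding indicators is $-\epsilon^2 q_n^+ q_n^-$; summing over modes, $\var(\mathcal N_n^+-\mathcal N_n^-)=M[\epsilon(q_n^++q_n^-)-\epsilon^2(q_n^+-q_n^-)^2]=N(q_n^++q_n^-)-N\epsilon(q_n^+-q_n^-)^2$. Now $\ket{\psi_n^+}\bra{\psi_n^+}+\ket{\psi_n^-}\bra{\psi_n^-}=\ket{\psi_n}\bra{\psi_n}+\ket{\psi_{n+1}}\bra{\psi_{n+1}}$ gives $q_n^++q_n^-=q_n+q_{n+1}=O(\Delta^{2n})$ (each $q_m=r_m\beta_{2m}=O(\Delta^{2m})$ with $r_m$ a nonzero constant), and Eqs.~(\ref{Qn1_mean})--(\ref{Qn2_mean}) give $q_n^+-q_n^-=s_n\beta_{2n+1}=O(\Delta^{2n+1})$, so the first term dominates: $\var(\mathcal N_n^+-\mathcal N_n^-)=N\,O(\Delta^{2n})[1+O(\epsilon)]$. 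Dividing by $(s_nN)^2$, with $s_n$ again a nonzero $\Delta$-independent constant, yields Eq.~(\ref{V2n1}); unbiasedness of $\check\beta_{2n+1}$ is immediate from $\expect(\mathcal N_n^+)-\expect(\mathcal N_n^-)=Ns_n\beta_{2n+1}$, i.e.\ from subtracting Eqs.~(\ref{Qn1_mean}) and~(\ref{Qn2_mean}).

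To close, I would invoke Theorem~\ref{thm_qbound} term by term. Each $\beta_{2n}$ is a generalized moment of order $\mu=2n$: from Eqs.~(\ref{C_tilde})--(\ref{C_tilde2}) its weight $b(x)=|C_n(x)|^2/r_n$ equals $x^{2n}+O(\Delta^{2n+1})=x^{2n}+o(\Delta^{2n})$ uniformly on $\supp P$, the uniformity coming from the power-series form of $C_n(x)$; likewise each $\beta_{2n+1}$ is a generalized moment of order $\mu=2n+1$. In both cases $\floor{\mu/2}=n$, so Theorem~\ref{thm_qbound} states that no unbiased estimator of $\beta_{2n}$ or $\beta_{2n+1}$ can have mean-square error below $\Omega(\Delta^{2n})/N$. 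Since $\check\beta_{2n}$ and $\check\beta_{2n+1}$ are unbiased, their mean-square errors equal the variances in Eqs.~(\ref{V2n})--(\ref{V2n1}), which are $O(\Delta^{2n})/N$; hence the SPADE variances reach the quantum-limited $\Delta$ exponent $2\floor{\mu/2}=2n$. The ``one or two modes'' statement is read off directly: $\check\beta_{2n}$ uses only PAD mode $n$, and $\check\beta_{2n+1}$ only the two modes $\ket{\psi_n}$ and $\ket{\psi_{n+1}}$.

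There is no deep obstacle; the argument is bookkeeping on top of binomial moments. The points needing genuine care are (i) confirming that $r_n>0$ and $s_n\neq0$ and that they are $\Delta$-independent, which rests entirely on the $\tilde\chol_{nn}\neq0$ clause of Lemma~\ref{lem_poly}; (ii) checking that the higher-order remainders inside $\beta_{2n}$, $\beta_{2n+1}$, and the iPAD probabilities are uniform in $x$ over $\supp P$, so that $|x|\le\Delta$ converts them into scalar bounds; and (iii) the sign bookkeeping in the covariance term. The Gaussian-$Q$ case needs no separate treatment, since the same $C_n(x)$ expansion and the bound $|x|\le\Delta$ on $\supp P$ apply verbatim.
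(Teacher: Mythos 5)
Your proof is correct and follows essentially the same route as the paper's: treating each photon count as a binomial process, computing $\var(\mathcal N_n)=Nq_n(1-\epsilon q_n)$, and using $q_n=r_n\beta_{2n}=O(\Delta^{2n})$ with $r_n>0$ a $\Delta$-independent constant. Your explicit covariance calculation for the iPAD pair (mutual exclusivity giving per-mode covariance $-\epsilon^2 q_n^+q_n^-$, hence $\var(\mathcal N_n^+-\mathcal N_n^-)=N(q_n+q_{n+1})[1+O(\epsilon)]$) merely fills in the step the paper dismisses as ``similar,'' and your order-matching against Theorem~\ref{thm_qbound} via $\floor{\mu/2}=n$ is exactly the intended argument.
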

\begin{proof}
  Given Eq.~(\ref{rho}), $\mathcal N$ is a binomial process
  \cite{falk11}, which leads to
\begin{align}
\var(\mathcal N_n) &= N q_n(1-\epsilon q_n) = 
N r_n\beta_{2n} \Bk{1+O(\epsilon)}.
\end{align}
The variance of the estimator given by Eq.~(\ref{estimator})
is then
\begin{align}
\var\bk{\check\beta_{2n}} &= \frac{\var(\mathcal N_n)}{r_n^2 N^2} 
= \frac{\beta_{2n}}{r_n N}\Bk{1+O(\epsilon)},
\end{align}
which gives Eq.~(\ref{V2n}), since $\beta_{2n} = O(\Delta^{2n})$.  The
derivation of Eq.~(\ref{V2n1}) using the basic properties of a
binomial process is similar.
\end{proof}
Note that the conditions for Proposition~\ref{prop_spade} are more
relaxed than those for Theorem~\ref{thm_qbound}.

The important point is that, although a finite number of PAD or iPAD
modes cannot measure a simple moment of the object exactly, they can
provide exact unbiased estimators of the moments they are naturally
measuring. By considering the latter as the parameters of interest,
Proposition~\ref{prop_spade} proves the quantum optimality of SPADE
with just one or two modes, at least in terms of the $\Delta$
scalings.

\section{\label{sec_direct}A bound for direct imaging}
The semiclassical model of direct imaging is to assume a probability
density $\eta(\xi,P)$ for each photon with respect to the Lebesgue
measure that obeys
\begin{align}
\eta(\xi,P) &= \int h(\xi-x) P(dx),
\label{eta}
\end{align}
where $h$ is the nonnegative point-spread function for direct
incoherent imaging.  For a diffraction-limited system, it is given by
\begin{align}
h(\xi) &= \abs{\frac{1}{\sqrt{2\pi}}\intall e^{ikx}\braket{k|\psi} dk}^2,
\end{align}
where $\braket{k|\psi}$ is the optical transfer function
\cite{goodman_stat}. Let $\nu^{(M,\epsilon)}[P]$ denote the
probability measure for the photon counts over $M$ temporal
modes. Previous studies \cite{tsang17,tsang18a,tsang19b} have
suggested that, for the semiparametric model
$\{\nu^{(M,\epsilon)}[P]:P\in\mathcal P\}$ based on Eq.~(\ref{set}),
the Cram\'er-Rao bound for the estimation of a simple moment with
direct imaging is
\begin{align}
\mathsf E &\ge \tilde{\mathsf C} = \frac{\Theta(1)}{N}.
\label{CRB_direct}
\end{align}
Unfortunately, this result is rigorously proven only when the location
family $\{h(\xi-x): x \in \supp P_0\}$ satisfies a special statistical
property called completeness \cite{lehmann_romano}; see
Ref.~\cite[Sec.~6.5]{bickel93}, Ref.~\cite[Sec.~25.5.2]{vaart}, and
Ref.~\cite{tsang19b}. While a Gaussian $h$ satisfies the property,
completeness turns out to be hard to prove more generally, and
Eq.~(\ref{CRB_direct}) remains questionable for any non-Gaussian $h$.

I turn again to the parametric-submodel approach in
Sec.~\ref{sec_submodel}, which leads
to the following proposition.

\begin{proposition}
\label{prop_direct}
Assume that the derivatives $h^{(n)}(\xi)$ of the point-spread
function $h(\xi)$ for direct imaging exist, up to order $n = \mu+1$.
If there exist nonnegative $x$-independent functions
$\underline{h}(\xi)$ and $\bar{h}(\xi)$ such that, for all
$\xi \in \mathbb R$ and for all $x \in \supp P_0$,
\begin{align}
h(\xi-x) &\ge \underline{h}(\xi),
\label{h0}
\\
|h^{(\mu+1)}(\xi-x)| &\le \bar{h}(\xi),
\label{h_mu1}
\end{align}
and the functions satisfy
\begin{align}
\intall \frac{[h^{(\mu)}(\xi)]^2}{\underline{h}(\xi)} d\xi &< \infty,
&
\intall \frac{[\bar{h}(\xi)]^2}{\underline{h}(\xi)} d\xi &< \infty,
\label{finite_int}
\end{align}
then the Cram\'er-Rao bound for the estimation of a generalized moment
of order $\mu$ is
\begin{align}
\tilde{\mathsf C} &= \frac{\Omega(1)}{N}.
\label{CRB_direct_bound}
\end{align}
\end{proposition}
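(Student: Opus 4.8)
The plan is to recycle the unfavorable parametric submodel of Sec.~\ref{sec_submodel}, now composed with the direct-imaging map $P \mapsto \eta(\cdot,P)$ of Eq.~(\ref{eta}) instead of with $\tau$, and to invoke the classical analogue of Lemma~\ref{lem_submodel} (Lemma~25.19 of Ref.~\cite{vaart}): any regular one-dimensional submodel through the truth yields a lower bound on the semiparametric Cram\'er-Rao bound $\tilde{\mathsf C}$, so it suffices to exhibit one submodel whose bound is $\Omega(1)/N$. I would use $\{P_\theta\}$ from Eqs.~(\ref{P_theta})--(\ref{g}) with the score chosen as the $\mu$th orthonormal polynomial of $P_0$, $S(x) = a_\mu(x)$, exactly as in Eq.~(\ref{a_mu}). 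The $\tanh$ construction makes $\{P_\theta\}$ regular, and since $h \ge 0$ with $\eta(\xi,P_0) = \int h(\xi - x)P_0(dx) \ge \underline{h}(\xi)$ by Eq.~(\ref{h0}), the induced model $\{\eta(\cdot,P_\theta)\}$ remains regular, with the conditional expectation (pushforward) of $S$ as its classical score and single-photon Fisher information
\begin{align}
\dot\eta(\xi) &\equiv \intall h(\xi - x)\,S(x)\,P_0(dx),
\\
J_\eta &\equiv \intall \frac{[\dot\eta(\xi)]^2}{\eta(\xi,P_0)}\,d\xi.
\end{align}
Because the $M$ temporal modes are independent and the one-photon probability $\epsilon$ does not depend on $\theta$ (each mode contributing $\epsilon J_\eta$), the total Fisher information is $N J_\eta$, so the submodel Cram\'er-Rao bound is $\dot\beta^2/(N J_\eta)$ with $\dot\beta = \avg{b,S}_{P_0}$, and $\tilde{\mathsf C} \ge \dot\beta^2/(N J_\eta)$.

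The crux is then to show $J_\eta = O(\Delta^{2\mu})$ while $\dot\beta = \Theta(\Delta^\mu)$. For $\dot\beta$: since $b(x) = x^\mu + o(\Delta^\mu)$ and $a_\mu$ is orthogonal in $L_2(P_0)$ to every polynomial of degree below $\mu$,
\begin{align}
\dot\beta = \avg{x^\mu,a_\mu}_{P_0} + o(\Delta^\mu) = \frac{1}{\kappa_\mu} + o(\Delta^\mu),
\end{align}
where $\kappa_\mu > 0$ is the leading coefficient of $a_\mu$; since $\supp P_0$ has size $\Theta(\Delta)$, $\kappa_\mu = \Theta(\Delta^{-\mu})$ by Lemma~\ref{lem_poly}, hence $\dot\beta = \Theta(\Delta^\mu)$. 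For $J_\eta$, I would Taylor-expand $h(\xi-x)$ in $x$ about $x = 0$ to order $\mu$, $h(\xi-x) = \sum_{j=0}^{\mu}\frac{(-x)^j}{j!}h^{(j)}(\xi) + R(\xi,x)$; pairing with $a_\mu(x)\,P_0(dx)$ annihilates every $j \le \mu-1$ term and leaves
\begin{align}
\dot\eta(\xi) &= \frac{(-1)^\mu}{\mu!}h^{(\mu)}(\xi)\avg{x^\mu,a_\mu}_{P_0} + \intall R(\xi,x)\,a_\mu(x)\,P_0(dx).
\end{align}
The Lagrange (or integral) form of the remainder with Eq.~(\ref{h_mu1}) gives $|R(\xi,x)| \le \Delta^{\mu+1}\bar{h}(\xi)/(\mu+1)!$ on $\supp P_0$, and $\norm{a_\mu}_{L_1(P_0)} \le \norm{a_\mu}_{P_0} = 1$ bounds the remainder integral. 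Then $(p+q)^2 \le 2p^2 + 2q^2$, $\eta(\xi,P_0) \ge \underline{h}(\xi)$, and the integrability in Eq.~(\ref{finite_int}) yield
\begin{align}
J_\eta &\le \frac{2}{\mu!^2}\avg{x^\mu,a_\mu}_{P_0}^2 \intall \frac{[h^{(\mu)}(\xi)]^2}{\underline{h}(\xi)}\,d\xi
\notag\\
&\quad {}+ \frac{2\Delta^{2\mu+2}}{(\mu+1)!^2}\intall \frac{[\bar{h}(\xi)]^2}{\underline{h}(\xi)}\,d\xi,
\end{align}
whose first term is $O(\Delta^{2\mu})$ (as $\avg{x^\mu,a_\mu}_{P_0} = 1/\kappa_\mu = O(\Delta^\mu)$) and second term is $O(\Delta^{2\mu+2}) = o(\Delta^{2\mu})$. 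Hence $J_\eta = O(\Delta^{2\mu})$ and $\tilde{\mathsf C} \ge \dot\beta^2/(N J_\eta) = \Omega(1)/N$, which is Eq.~(\ref{CRB_direct_bound}).

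The main obstacle I anticipate is the remainder estimate. The remainder $R(\xi,x)$ is evaluated at arguments $\xi - x^*$ with $x^*$ between $0$ and $x$, hence in the convex hull of $\{0\}\cup\supp P_0$ rather than in $\supp P_0$ itself (expanding instead about a point of $\supp P_0$ reduces this to the convex hull of $\supp P_0$), so Eqs.~(\ref{h0})--(\ref{h_mu1}) have to be read --- or mildly strengthened, e.g.\ by taking $\supp P_0$ convex --- so that $\underline{h}$ and $\bar{h}$ still dominate $h$ and $|h^{(\mu+1)}|$ there; and Fubini must be justified when interchanging the $x$- and remainder-variable integrals, which is exactly where the integrability conditions Eq.~(\ref{finite_int}) on $\bar{h}/\underline{h}$ and $h^{(\mu)}/\underline{h}$ are needed. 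This interplay is the source of the somewhat awkward hypotheses and of the limited generality acknowledged in the text; everything else --- regularity of the $\tanh$ submodel and its preservation under the direct-imaging map, additivity of Fisher information over the temporal modes, and the $\Theta(\Delta^{-\mu})$ growth of $\kappa_\mu$ from Lemma~\ref{lem_poly} --- is routine.
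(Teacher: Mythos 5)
Your proposal is correct and follows essentially the same route as the paper's proof: the same $\tanh$ submodel with score $a_\mu$, the same order-$\mu$ Taylor expansion of $h(\xi-x)$ whose sub-$\mu$ terms are annihilated by orthogonality, the same Lagrange-remainder bound via Eq.~(\ref{h_mu1}), the lower bound $\eta_0 \ge \underline{h}$ from Eq.~(\ref{h0}), and the integrability conditions of Eq.~(\ref{finite_int}) to conclude that the Fisher information is $O(\Delta^{2\mu})$ while $\dot\beta = \Theta(\Delta^\mu)$. The subtlety you flag about the remainder being evaluated at $\xi - \Delta\bar y$ with $\bar y$ between $0$ and $y$ (hence in the convex hull of $\{0\}\cup\supp P_0$ rather than in $\supp P_0$ itself) is present but left implicit in the paper's treatment as well.
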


The proof is deferred to Appendix~\ref{app_prop_direct}.

The conditions for Proposition~\ref{prop_direct} seem specific, but
they can be checked for simple functions, such as
\begin{align}
h(\xi) &= \frac{d_1}{(\xi/d_2)^{2p} + 1},
&
h(\xi) &= d_1 e^{-(\xi/d_2)^{2p}},
\label{simple_h}
\end{align}
for some positive constants $d_1,d_2$ and positive integer $p$.  The
tails of $h^{(p)}(\xi-x)$ for these functions decay at least as fast
as the tails of $h(\xi)$, while an $\underline{h}(\xi)$ to lower-bound
$h(\xi-x)$ can be obtained by replacing $\xi^2$ by
$(|\xi|+\Delta_0)^2$ in Eqs.~(\ref{simple_h}) for some $\Delta_0$ that
upper-bounds all $\Delta$ of interest, as that would give
$|\xi-x| \le |\xi| + |x| \le |\xi|+\Delta \le |\xi|+\Delta_0$ and a
lower bound on $h(\xi-x)$ for all $x \in \supp P_0$. Then
$\underline{h}(\xi)$ remains positive and has tails that decay in the
same way as $h(\xi)$, leading to the finite integrals in
Eqs.~(\ref{finite_int}). The advantage of
Proposition~\ref{prop_direct} over the previous approaches in
Refs.~\cite{tsang17,tsang18a,tsang19b} is that the conditions for the
former are still easier to check than proving the completeness
property, and Proposition~\ref{prop_direct} can remain valid even for
an incomplete location family.

The physical implication of Proposition~\ref{prop_direct} is that, for
a generalized moment of order $\mu \ge 2$, both the quantum limit
given by Theorem~\ref{thm_qbound} and the SPADE performance given by
Proposition~\ref{prop_spade} are substantial improvements over direct
imaging. That said, there exist counterexamples in which the
point-spread function has zeros and the conditions for
Proposition~\ref{prop_direct} are violated. The argument by Pa\'ur and
coworkers \cite{paur18,paur19}, in particular, suggests that the zeros
can enhance the Fisher-information integral for the submodel by a
$\Theta(1/\Delta)$ factor, leading to
\begin{align}
\tilde{\mathsf C} & = \frac{\Omega(\Delta)}{N},
\label{conjecture}
\end{align}
but despite some numerical evidence (unpublished), I am unable to
prove this bound in a general fashion. It therefore remains an open
problem whether Eq.~(\ref{conjecture}) is the ultimate limit to direct
imaging, or by how much the zeros in its point-spread function can
improve it.

\section{\label{conclusion}Conclusion}
The key results of this work are the quantum limit given by
Theorem~\ref{thm_qbound}, the SPADE performance given by
Proposition~\ref{prop_spade}, and the direct-imaging bound given by
Proposition~\ref{prop_direct}. They confirm rigorously the prior
intuition that the moments of smaller subdiffraction objects are
harder to estimate, especially for higher orders, although SPADE can
estimate them with optimal error scalings, while direct imaging is
unlikely to be nearly as efficient. Beyond the improved rigor, a
useful advance is the generalization of the results for a larger class
of moments, such that the quantum optimality of SPADE with a finite
number of modes is proved and its experimental demonstration becomes
much easier.

The price to pay for the generality of the results here is their
imprecise nature in terms of the asymptotic notions. More precise
results can be obtained numerically for more special cases, as has
been done in Refs.~\cite{tsang17,tsang18a,tsang19b} regarding the
SPADE and direct-imaging performances. To compute concrete quantum and
classical limits, the parametric-submodel approach should help, as it
is able to give bounds for an infinite-dimensional model through 1D
submodels. The submodel bounds can be computed numerically without
resorting to purifications, at least for special cases of the true
measure $P_0$.

Other interesting future directions include more rigorous proofs of
the convergence of thermal models to the binomial and Poisson models
considered here, the study of more general types of parameters,
Bayesian and minimax approaches \cite{gill95,tsang18,tsang20b}, and,
of course, experimental demonstrations of quantum-limited moment
estimation. Beyond imaging, the model and the results here may be
applied or generalized to other sensing applications, such as the
estimation of diffusion parameters in phase estimation and
optomechanics \cite{vidrighin,ng16}. The general principles
established in this work thus give rigorous underpinnings to the
foundations of imaging theory and possibly beyond.

\section*{Acknowledgments}
This work is supported by the National Research Foundation (NRF)
Singapore, under its Quantum Engineering Programme (Grant No.~QEP-P7).

\appendix

\section{\label{app_lem_pur}Proof of Lemma~\ref{lem_pur}}
Before proving Lemma~\ref{lem_pur}, I collect some basic facts in
the following lemma.
\begin{alemma}
\label{lem_poly}
In this lemma, the sample space is $\mathbb R$, the inner product is
given by Eq.~(\ref{inner_P}), and the measure $P$ has finite moments
and $\#\supp P = J$, where $J$ may be infinite. Let $A_{(j)}$ denote
the $(j+1)\times (j+1)$ upper-left submatrix of a matrix $A$ for a
$j \in \mathbb N_{\bko{0,J}} \equiv \{j \in \mathbb N_0: j < J\}$.
\begin{enumerate}
\item $H_{(j)}$ for the Hankel matrix given by Eq.~(\ref{hankel}) is
  positive-definite ($H_{(j)}> 0$) for any $j \in \mathbb N_{\bko{0,J}}$.

\item The monomials $\{x^p: p \in \mathbb N_{\bko{0,J}}\}$ are
  linearly independent.

\item Define the orthonormal polynomials
$\{a_n(x): n \in \mathbb N_{\bko{0,J}}\}$ as
\begin{align}
a_n(x) &= \sum_{p=0}^n A_{np} x^p,
\end{align}
where $A$ is the lower-triangular matrix ($A_{np} = 0$ if $p > n$)
obtained by applying the Gram-Schmidt procedure to the monomials, such
that
\begin{align}
\Avg{a_n,a_m}_P &= \delta_{nm}.
\end{align}
One can make $A_{nn} > 0$ for all $n$. The orthonormality implies, for
any $j \in \mathbb N_{\bko{0,J}}$,
\begin{align}
A_{(j)} H_{(j)} A_{(j)}^{\top} &= I_{(j)},
&
H_{(j)}^{-1} &= A_{(j)}^{\top}A_{(j)},
\end{align}
where $\top$ denotes the transpose and $I$ is the identity matrix.

\item Let $\chol$ be the lower-triangular Cholesky factor of $H$.
  Then, for any $j,n,p \in \mathbb N_{\bko{0,J}}$,
\begin{align}
\chol_{(j)} \chol_{(j)}^{\top} &= H_{(j)},
\quad
\chol_{(j)} = A_{(j)}^{-1},
\label{LHA}
\\
\chol_{nn} &= \frac{1}{A_{nn}} > 0,
\\
x^p &= \sum_{n=0}^p \chol_{pn} a_n(x),
\quad
\Avg{x^p,a_n}_P = \chol_{pn}.
\end{align}
\end{enumerate}
\end{alemma}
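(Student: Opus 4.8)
The plan is to prove the four assertions in order, each one leaning on the previous.

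\emph{Items 1 and 2.} For a fixed $j \in \mathbb N_{\bko{0,J}}$ and a real vector $c = (c_0,\dots,c_j)^\top$, the quadratic form is $c^\top H_{(j)} c = \int \big(\sum_{p=0}^j c_p x^p\big)^2 P(dx) = \norm{\pi}_P^2 \ge 0$, where $\pi(x) = \sum_{p=0}^j c_p x^p$, so $H_{(j)} \ge 0$. If $\norm{\pi}_P^2 = 0$ then $\pi$ vanishes $P$-a.e., hence at every point of $\supp P$; but a nonzero polynomial of degree at most $j$ has at most $j$ distinct roots, and $\#\supp P = J > j$, so $\pi$ is the zero polynomial, i.e.\ $c = 0$. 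This gives $H_{(j)} > 0$. The same degree-counting argument shows that any finite linear combination of the monomials $\{x^p\}$ vanishing in $L_2(P)$ has all coefficients zero, which is item~2. The hypothesis $\#\supp P = J$ enters precisely here.

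\emph{Item 3.} Linear independence of the monomials together with the inner-product structure of $L_2(P)$ lets the Gram--Schmidt procedure run on $1,x,x^2,\dots$, producing for each $n \in \mathbb N_{\bko{0,J}}$ a polynomial $a_n$ of exact degree $n$ with $A_{np} = 0$ for $p > n$ and $A_{nn} \ne 0$; a choice of sign makes $A_{nn} > 0$, so $A$ is lower-triangular with positive diagonal. Collecting the polynomials and monomials up to degree $j$ into column vectors $\mathbf a_{(j)}$ and $\mathbf x_{(j)}$ with $\mathbf a_{(j)} = A_{(j)}\mathbf x_{(j)}$, the orthonormality $\avg{a_n,a_m}_P = \delta_{nm}$ is the statement that the $P$-weighted Gram matrix of $\mathbf a_{(j)}$ is $I_{(j)}$, i.e.\ $I_{(j)} = A_{(j)} H_{(j)} A_{(j)}^\top$. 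Since $A_{(j)}$ is invertible (lower-triangular, nonzero diagonal), rearranging yields $H_{(j)}^{-1} = A_{(j)}^\top A_{(j)}$. A bookkeeping point worth noting, so that the $J=\infty$ statements stay consistent with every finite truncation, is that the $(j{+}1)\times(j{+}1)$ truncation of the infinite Gram--Schmidt output equals Gram--Schmidt applied to $1,\dots,x^j$, because $a_0,\dots,a_j$ depend only on those monomials; the lower-triangular nesting makes this automatic.

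\emph{Item 4.} Positive-definiteness of every leading principal submatrix $H_{(j)}$ (item~1) guarantees the Cholesky factorization of $H$ exists and that each $\chol_{(j)}$ is the unique lower-triangular factor of $H_{(j)}$ with strictly positive diagonal. From item~3, $H_{(j)} = (A_{(j)}^{-1})(A_{(j)}^{-1})^\top$, and $A_{(j)}^{-1}$ is again lower-triangular with positive diagonal (its diagonal entries are $1/A_{nn}$), so uniqueness forces $\chol_{(j)} = A_{(j)}^{-1}$, hence $\chol_{nn} = 1/A_{nn} > 0$. Inverting $\mathbf a_{(j)} = A_{(j)}\mathbf x_{(j)}$ gives $\mathbf x_{(j)} = \chol_{(j)}\mathbf a_{(j)}$, that is, $x^p = \sum_{n=0}^p \chol_{pn} a_n(x)$; pairing with $a_n$ and using orthonormality yields $\avg{x^p,a_n}_P = \chol_{pn}$. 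I do not expect a real obstacle; the only care needed is the triangular-consistency bookkeeping just mentioned (so that Gram--Schmidt, Cholesky, and inversion all respect the nesting of the $(j)$-submatrices) and the degree/root count in item~1.
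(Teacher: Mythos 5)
Your proof is correct and takes essentially the same approach as the paper's: for item 1 the paper uses the same root-counting argument on $\supp P$ (spelled out via an explicit almost-everywhere contradiction rather than your appeal to the continuity of polynomials, which is an equally valid way to pass from ``vanishes $P$-a.e.'' to ``vanishes on $\supp P$''), and it defers items 2--4 to standard linear algebra with a citation. You have simply filled in those omitted details, including the Cholesky-uniqueness identification $\chol_{(j)} = A_{(j)}^{-1}$ and the nesting consistency, all correctly.
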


\begin{proof}[Proof of Lemma~\ref{lem_poly}]
  I prove only statement 1 here; the rest is basic linear algebra;
  see, for example, Ref.~\cite{golub10}. Write $C = \supp P$ for
  brevity in this proof.  Recall that the support is defined uniquely
  by the following conditions \cite[Theorem~2.1]{parthasarathy67}:
\begin{enumerate}
\item $P(C) = 1$.
\item If a closed set $D$ satisfies $P(D) = 1$, then $C \subseteq D$.
\end{enumerate}
In other words, the support is the smallest closed set with unit
probability. The second condition is equivalent to the condition that
any closed strict subset $D$ of $C$ must give $P(D) < 1$.

Given any column vector $\el \in \mathbb R^{j+1}$ with
$\el^\top \el \neq 0$, let
\begin{align}
\tilde\el(x) &\equiv \sum_{p=0}^j \el_p x^p,
&
Z &\equiv \BK{x \in C: \tilde\el(x) = 0}.
\end{align}
Then
\begin{align}
\Bk{\tilde\el(x)}^2 &> 0 \quad \forall x \in C - Z.
\label{pj_pos}
\end{align}
As any nonzero polynomial of degree $j$ has at most $j$ isolated
zeros, $\# Z \le j$ and $Z$ is closed. Since $\# C =J > j \ge \#Z$,
$C - Z$ cannot be empty, $Z$ is a strict subset of $C$, and by the
definition of $C$,
\begin{align}
P(Z) &< 1,
&
P(C-Z) &= P(C)-P(Z) > 0.
\end{align}
Now consider
\begin{align}
\el^\top H_{(j)} \el &= \int_{C}[\tilde\el(x)]^2 P(dx)
= \int_{C-Z} [\tilde\el(x)]^2 P(dx).
\label{step_pd}
\end{align}
I wish to prove that the last expression is not zero by contradiction.
Suppose that it is zero. Then, by Proposition~4.1.7 in
Ref.~\cite{parthasarathy05},
\begin{align}
[\tilde\el(x)]^2  &= 0 \quad \textrm{a.e.\ on }C - Z,
\label{pj_0}
\end{align}
where a.e.\ denotes almost everywhere. Equation~(\ref{pj_0}) implies
the existence of a set $Y$ with $P(Y) = 0$ such that $[\tilde\el(x)]^2 = 0$
for all $x \in C-Z-Y$. $C-Z-Y$ here is not empty, because
$P(C-Z-Y) = P(C-Z) > 0$. Then, for any $x \in C-Z-Y \subseteq C-Z$,
the $[\tilde\el(x)]^2 = 0$ statement contradicts Eq.~(\ref{pj_pos}). By
contradiction, Eq.~(\ref{step_pd}) cannot be
zero and must be strictly positive.  As $\el$ is arbitrary,
$H_{(j)} > 0$.
\end{proof}

\begin{proof}[Proof of Lemma~\ref{lem_pur}]
  If the support of $P$ is infinite and bounded, the set
  $\{a_n(x):n\in \mathbb N_0\}$ defined in Lemma~\ref{lem_poly} is an
  orthonormal basis of $L_2(P)$
  \cite[Proposition~6.4.1]{parthasarathy05}. Let its $\mathcal H'$
  representation be
\begin{align}
a_n(x) &\leftrightarrow a_n(\hat x)\ket{\phi} \equiv \ket{\varphi_n}.
\end{align}
Then $\{\ket{\varphi_n}:n\in\mathbb N_0\}$ is the same orthonormal
basis in the new representation, and a resolution of the identity
$I_{\mathcal H'}$ in the strong operator topology
\cite[Sec.~1.6]{davies} is
\begin{align}
I_{\mathcal H'} &= \sum_{n=0}^\infty \ket{\varphi_n}\bra{\varphi_n}.
\end{align}
An isometry $U:\mathcal H' \to \mathcal H''$ can then be written as
\begin{align}
U &= \sum_{n=0}^\infty \ket{n}\bra{\varphi_n},
\label{iso}
\end{align}
also in the strong operator topology. A proof is as follows.
Let 
\begin{align}
I_j &\equiv \sum_{n=0}^j \ket{\varphi_n}\bra{\varphi_n},
&
U_j &\equiv \sum_{n=0}^j \ket{n}\bra{\varphi_n}.
\end{align}
Given any $\ket{\el} \in \mathcal H'$ and $J > j$, consider
\begin{align}
\norm{U_J \ket{\el} - U_j\ket{\el}}^2  
&= \sum_{n=j+1}^J \abs{\braket{\varphi_n|\el}}^2 = s_J - s_j,
\label{strong_U}
\end{align}
where $\norm{\ket{\cdot}} \equiv \sqrt{\braket{\cdot|\cdot}}$ and
$s_j \equiv \sum_{n=0}^j |\braket{\varphi_n|\el}|^2$.  $\{s_j\}$
is monotonic, and since $\{\ket{\varphi_n}\}$ is an orthonormal
sequence, $\{s_j\}$ is bounded by Bessel's inequality
\cite[Eq.~(3.24)]{debnath05} and therefore convergent
\cite[Theorem~3.14]{rudin_baby} and Cauchy
\cite[Theorem~3.11]{rudin_baby}. With the Cauchy property, given any
$\varepsilon > 0$, there exists a $j_0$ such that, for all
$J \ge j \ge j_0$, $s_J - s_j < \varepsilon^2$ and
$\norm{U_J \ket{\el} - U_j\ket{\el}} = \sqrt{s_J - s_j} <
\varepsilon$.  Thus, $\{U_j\ket{\el}\}$ is also Cauchy.  Since a
Hilbert space is by definition complete
\cite[Definition~3.3.1]{debnath05}, $\{U_j\ket{\el}\}$ converges
\cite[Definition~1.4.5]{debnath05}, and since $\ket{\el}$ is
arbitrary, $\{U_j\}$ converges strongly. By the same argument,
$\{I_j\}$ and $\{U_j^\dagger\}$ also converge strongly.  The strong
convergences imply that $\{U_j^\dagger U_j = I_j\}$ converges strongly
to $U^\dagger U = I_{\mathcal H'}$.

If the support of $P$ is bounded, $\hat x$ can be assumed to be a
bounded operator with a finite operator norm $\opnorm{\hat x}$, since
$\norm{\hat x\ket{\el}} = \norm{x \el}_P\le \Delta \norm{\el}_P =
\Delta \norm{\ket{\el}}$ for any $\el \leftrightarrow \ket{\el}$.  As
$\hat k$ is also assumed to be bounded, the exponential operator 
in Eq.~(\ref{natural_pur}) can be expressed as
\begin{align}
e^{-i\hat k\otimes\hat x} = 
\sum_{p=0}^\infty \frac{(-i\hat k)^p\otimes\hat x^p}{p!}
\end{align}
in the sense of operator-norm convergence
\cite[Sec.~VIII.4]{reed_simon}, which implies strong convergence. Let
\begin{align}
  F_j &\equiv \sum_{p=0}^j \frac{(-i\hat k)^p\otimes\hat x^p}{p!}.
\end{align}
As $\{U_j\}$ and $\{F_j\}$ converge strongly,
$\{(I_{\mathcal H}\otimes U_j) F_j\}$ also converges strongly.  With
the aid of Lemma~\ref{lem_poly}, I obtain
\begin{align}
\hat x^p \ket\phi &= \sum_{n=0}^p \chol_{pn} a_n(\hat x)\ket\phi
= \sum_{n=0}^p \chol_{pn} \ket{\varphi_n},
\label{xp_phi}
\\
\ket{\Phi^j} &\equiv \bk{I_{\mathcal H}\otimes U_j} 
F_j \ket{\psi}\otimes\ket{\phi}
\\
&= \sum_{p=0}^j \sum_{n=0}^p \frac{(-i\hat k)^p}{p!}\chol_{pn}\ket{\psi}\otimes\ket{n}.
\label{PhiN}
\end{align}
The strong convergence of $\{(I_{\mathcal H}\otimes U_j) F_j\}$
implies the strong convergence of $\{\ket{\Phi^j}\}$, leading to
Eq.~(\ref{Phi}).
\end{proof}

\begin{remark}
  Throughout this paper, the convergence of a sequence of
  Hilbert-space elements is always assumed to be strong.
\end{remark}

\section{\label{app_lem_submodel}Proof of Lemma~\ref{lem_submodel}}
\begin{proof}[Proof of Lemma~\ref{lem_submodel}]
  Equation~(\ref{H_formula}) comes from the fact that
  $\{e_1 \equiv \el/\norm{\el}_\rho\}$ is the orthonormal basis of
  $\spn \el$ and
\begin{align}
\Pi(\delta|\spn \el) &= \avg{\delta,e_1}_\rho e_1 
= \frac{\avg{\delta,\el}_\rho}{\norm{\el}_\rho^2} \el.
\end{align}
Of course, $\Pi(\delta|\spn 0) = 0$.

Now consider the supremum in Eq.~(\ref{supremum}), which is defined by
two conditions \cite{rudin_baby}:
\begin{enumerate}
\item $\tilde{\mathsf H}$ is an upper bound on the set
  $\{\mathsf H(\el): \el \in \spn\{S\}\}$.
\item $\tilde{\mathsf H} - \varepsilon$ is not an upper bound for any
  $\varepsilon > 0$.
\end{enumerate}
Since $\deff \in \mathcal D$ \cite{tsang20},
  $\mathsf H(\el)$ can also be expressed as
\begin{align}
\mathsf H(\el) &= \norm{\Pi(\deff|\spn \el)}_\rho^2.
\end{align}
Then
\begin{align}
\tilde{\mathsf H} &= \norm{\deff}_\rho^2 \ge 
\norm{\Pi(\deff|\spn \el)}_\rho^2 = \mathsf H(\el)
\quad
\forall \el,
\end{align}
by the shrinking property of the projection \cite{debnath05}. The
first condition is therefore satisfied.

Assume now $\deff \neq 0$, for
$\tilde{\mathsf H} = \mathsf H(\el) = 0\ \forall \el$ otherwise and
the lemma is trivial. As $\deff \in \mathcal T$ and
$\mathcal T$ is the closure of $\spn\{S\}$, there exists a sequence
$\{\el_j \}\subseteq \spn\{S\}$ that converges to $\deff$
\cite[Theorem~1.3.23]{debnath05}.  The convergence means that
$\norm{\el_j}_\rho \to \norm{\deff}_\rho$ and
$\avg{\el,\el_j}_\rho \to \avg{\el,\deff}_\rho$ for any
$\el \in \mathcal T$ \cite[Theorem~3.3.12]{debnath05}. With
$\norm{\deff}_\rho > 0$, I can assume
$\norm{\el_j}_\rho > 0$ for all $j\ge J$ and a sufficiently large
$J$. Thus,
\begin{align}
\mathsf H(\el_j) &= 
\frac{\avg{\deff,\el_j}_\rho^2}{\norm{\el_j}_\rho^2}
\to \norm{\deff}_\rho^2 = \tilde{\mathsf H}.
\end{align}
This limit implies that, given any $\varepsilon > 0$, there exists a
$\el \in \{\el_j \}\subseteq \spn\{S\}$ such that
\begin{align}
|\tilde{\mathsf H}-\mathsf H(\el)| &= 
\tilde{\mathsf H}-\mathsf H(\el) < \varepsilon,
&
\mathsf H(\el) &> \tilde{\mathsf H} -\varepsilon.
\end{align}
Thus, the second condition is also satisfied.
\end{proof}

\section{\label{app_lem_monotone}Proof of Lemma~\ref{lem_monotone}}
Before proving Lemma~\ref{lem_monotone}, I present a couple of needed
lemmas.  The first is a trivial result regarding the partial trace and
the $L_2$ spaces related by it. I write $I_{\mathcal H'} = I$ for
brevity in this appendix.
\begin{alemma}
\label{lem_partial}
Let $\omega$ be a state on $\mathcal H\otimes\mathcal H'$ and
$\tau = \trace'\omega$.  Then
\begin{align}
\Avg{\el,\elalt}_\tau &= \Avg{\el\otimes I,\elalt\otimes I}_\omega
\quad
\forall \el,\elalt \in L_2(\tau),
\label{inner_tau_omega}
\\
\norm{\el}_\tau &= \norm{\el\otimes I}_\omega
\quad
\forall \el \in L_2(\tau).
\label{norm_partial}
\end{align}
\end{alemma}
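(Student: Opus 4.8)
The plan is to verify both identities first on the dense subset $\mathcal B(\mathcal H)$ of bounded observables, where the computation is elementary, and then to extend by continuity to all of $L_2(\tau)$. For bounded $\el,\elalt$, the decisive point is that ampliation by the identity, $\el\mapsto\el\otimes I$, respects operator products, $(\el\otimes I)(\elalt\otimes I)=(\el\elalt)\otimes I$, and therefore also Jordan products, $(\el\otimes I)\circ(\elalt\otimes I)=(\el\circ\elalt)\otimes I$; note in passing that $\el\circ\elalt$ is again a bounded observable and that $\el\otimes I$ is a bounded observable on $\mathcal H\otimes\mathcal H'$ with $\opnorm{\el\otimes I}=\opnorm{\el}$. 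Combining this with the defining property of the partial trace, $\trace\Bk{(A\otimes I)\omega}=\trace\Bk{A\,\trace'\omega}$ for every bounded $A$ on $\mathcal H$ and every trace-class $\omega$, applied with $A=\el\circ\elalt$ and recalling $\tau=\trace'\omega$, I obtain
\begin{align}
\avg{\el\otimes I,\elalt\otimes I}_\omega
&= \trace\Bk{\bk{(\el\circ\elalt)\otimes I}\omega} \\
&= \trace\Bk{(\el\circ\elalt)\,\tau}=\avg{\el,\elalt}_\tau ,
\end{align}
which is Eq.~(\ref{inner_tau_omega}) for bounded observables; taking $\elalt=\el$ gives Eq.~(\ref{norm_partial}) for bounded observables.

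The extension to $L_2(\tau)$ is then routine. The previous step shows that the linear map $\el\mapsto\el\otimes I$ from $\mathcal B(\mathcal H)$ into $\mathcal B(\mathcal H\otimes\mathcal H')\subseteq L_2(\omega)$ is norm-preserving, and $\mathcal B(\mathcal H)$ is by construction dense in $L_2(\tau)$; hence, by the standard extension theorem for bounded linear maps defined on a dense subspace, it extends uniquely to an isometry $L_2(\tau)\to L_2(\omega)$, whose value at a general $\el\in L_2(\tau)$ I keep denoting $\el\otimes I$, consistently with the statement of the lemma. For such $\el$, choosing bounded observables $\el_j\to\el$ in $L_2(\tau)$ gives $\el_j\otimes I\to\el\otimes I$ in $L_2(\omega)$, so Eq.~(\ref{norm_partial}) follows from continuity of the norm; Eq.~(\ref{inner_tau_omega}) then follows either by polarization from Eq.~(\ref{norm_partial}) or by passing to the limit with $\el_j\to\el$ and $\elalt_j\to\elalt$ and using joint continuity of the inner product.

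There is no real obstacle here — the lemma merely records that the partial trace induces an inner-product-preserving embedding of the associated $L_2$ spaces — and the only point worth a word of care is the infinite-dimensional partial-trace identity $\trace\Bk{(A\otimes I)\omega}=\trace\Bk{A\,\trace'\omega}$. This is standard: $\omega$ is a state, hence trace-class; $\trace'$ maps trace-class operators to trace-class operators; and the partial trace is by definition the unique such map obeying this identity against all bounded $A$, as in the quantum-probability references cited in the paper. Granting that, the displayed computation goes through verbatim, and the rest is the density argument above.
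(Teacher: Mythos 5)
Your proposal is correct and follows essentially the same route as the paper: verify the identity on the dense subset $\mathcal B(\mathcal H)$ using the Jordan-product compatibility of ampliation and the defining property of the partial trace, then extend to $L_2(\tau)$ by continuity of the inner product along approximating sequences of bounded observables. Your explicit remark that $\el\otimes I$ for unbounded $\el\in L_2(\tau)$ must be read as the unique isometric extension of the ampliation map is a small point of added care that the paper leaves implicit, but it does not change the argument.
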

\begin{proof}[Proof of Lemma~\ref{lem_partial}] By the definition of
  partial trace \cite[Proposition~16.6]{parthasarathy92},
\begin{align}
\trace \el \tau &= \trace \bk{\el\otimes I}\omega 
\quad
\forall \el \in \mathcal B(\mathcal H).
\label{partial_trace}
\end{align}
Then, for all
$\el,\elalt \in \mathcal B(\mathcal H)$,
\begin{align}
\Avg{\el,\elalt}_\tau &= \trace \bk{\el\circ \elalt} \tau= 
\trace \Bk{\bk{\el\otimes I}\circ \bk{\elalt\otimes I}}\omega
\nonumber\\
&= \Avg{\el\otimes I,\elalt\otimes I}_\omega.
\label{step_bounded}
\end{align}
If any of $\el,\elalt \in L_2(\tau)$ are unbounded operators, let
$\{\el_{j}\}$ and $\{\elalt_{j}\}$ be sequences in the dense subset
$\mathcal B(\mathcal H)$ that converge to them. Then
\cite[Theorem~3.3.12]{debnath05}
\begin{align}
\Avg{\el_j,\elalt_j}_\tau &\to \Avg{\el,\elalt}_\tau,
\label{hg}
\\
\Avg{\el_j\otimes I,\elalt_j\otimes I}_\omega
&\to 
\Avg{\el\otimes I,\elalt\otimes I}_\omega.
\label{hIgI}
\end{align}
As Eq.~(\ref{step_bounded}) holds for all
$\el_j,\elalt_j \in \mathcal B(\mathcal H)$, the right-hand sides of
Eqs.~(\ref{hg}) and (\ref{hIgI}) are also equal, proving
Eq.~(\ref{inner_tau_omega}). Eq.~(\ref{norm_partial}) is a direct
consequence of Eq.~(\ref{inner_tau_omega}).
\end{proof}
I now present a useful lemma regarding a quantum generalization of the
conditional expectation, which appears in many other contexts
\cite{ohki18,tsang19c,hayashi}.  The proof follows that of Theorem~6.1
in Ref.~\cite{hayashi} for the finite-dimensional case.

\begin{alemma}
\label{lem_cond}
Let $\omega$ be a state on $\mathcal H\otimes\mathcal H'$ and
$\tau = \trace'\omega$. For each $X \in L_2(\omega)$,
$\avg{\el\otimes I,X}_\omega$ is a bounded linear
functional of $\el \in L_2(\tau)$. Define $\pi(X) \in L_2(\tau)$ as the
unique Riesz representation that obeys
\begin{align}
\Avg{\el,\pi(X)}_\tau &= \Avg{\el\otimes I,X}_\omega
\quad
\forall \el \in L_2(\tau).
\label{riesz}
\end{align}
Then the linear map $\Pi:L_2(\omega)\to L_2(\omega)$ defined as
\begin{align}
\Pi X &\equiv \pi(X)\otimes I
\label{Pi}
\end{align}
is a projection. In particular, $\pi$ satisfies the shrinking property
\begin{align}
\norm{\pi(X)}_\tau =
\norm{\Pi X}_\omega &\le \norm{X}_\omega \quad
\forall X \in L_2(\omega).
\label{contract}
\end{align}
\end{alemma}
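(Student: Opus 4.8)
The plan is to verify the three defining properties of an orthogonal projection on $L_2(\omega)$ — well-definedness, idempotency, and self-adjointness — and then read off the shrinking property. First I would establish that $\el\mapsto\avg{\el\otimes I,X}_\omega$ is a bounded linear functional on $L_2(\tau)$. Linearity is immediate, and by the Cauchy--Schwarz inequality together with Eq.~(\ref{norm_partial}) of Lemma~\ref{lem_partial},
\begin{align}
\abs{\avg{\el\otimes I,X}_\omega} &\le \norm{\el\otimes I}_\omega\norm{X}_\omega = \norm{\el}_\tau\norm{X}_\omega ,
\end{align}
so the functional has norm at most $\norm{X}_\omega$. The Riesz representation theorem \cite[Theorem~3.7.7]{debnath05} then furnishes the unique $\pi(X)\in L_2(\tau)$ obeying Eq.~(\ref{riesz}), and uniqueness forces $\pi$, hence $\Pi$, to be linear.

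Next I would show $\Pi$ is idempotent, for which it suffices to prove $\pi(\pi(X)\otimes I)=\pi(X)$. For any $\el\in L_2(\tau)$, applying Eq.~(\ref{riesz}) with $X$ replaced by $\pi(X)\otimes I$ and then Eq.~(\ref{inner_tau_omega}) of Lemma~\ref{lem_partial} gives
\begin{align}
\avg{\el,\pi(\pi(X)\otimes I)}_\tau &= \avg{\el\otimes I,\pi(X)\otimes I}_\omega = \avg{\el,\pi(X)}_\tau ,
\end{align}
so uniqueness of the Riesz representation yields $\pi(\pi(X)\otimes I)=\pi(X)$, whence $\Pi^2 X = \pi(\pi(X)\otimes I)\otimes I = \pi(X)\otimes I = \Pi X$.

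For self-adjointness I would take arbitrary $X,Y\in L_2(\omega)$ and apply Eq.~(\ref{riesz}) twice, once with $\el=\pi(X)$ (in the defining relation for $\pi(Y)$) and once with $\el=\pi(Y)$ (in the defining relation for $\pi(X)$), to obtain $\avg{\Pi X,Y}_\omega = \avg{\pi(X)\otimes I,Y}_\omega = \avg{\pi(X),\pi(Y)}_\tau = \avg{\pi(Y)\otimes I,X}_\omega = \avg{X,\Pi Y}_\omega$, using symmetry of the real inner product. Thus $\Pi$ is a self-adjoint idempotent, i.e.\ an orthogonal projection, and the shrinking property follows; alternatively it drops out directly from $\norm{\pi(X)}_\tau^2 = \avg{\pi(X),\pi(X)}_\tau = \avg{\pi(X)\otimes I,X}_\omega \le \norm{\pi(X)}_\tau\norm{X}_\omega$ via Cauchy--Schwarz and Lemma~\ref{lem_partial}, which also delivers $\norm{\pi(X)}_\tau=\norm{\Pi X}_\omega$ from Eq.~(\ref{norm_partial}). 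The only point requiring care is that every one of these identities must hold for general (possibly unbounded) elements of the $L_2$ spaces, not merely for bounded observables — but this is precisely what Lemma~\ref{lem_partial} secures by density, so no genuine obstacle remains and the argument is routine Riesz-representation bookkeeping.
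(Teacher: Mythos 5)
Your proposal is correct and follows essentially the same route as the paper's proof: boundedness of the functional via Cauchy--Schwarz and Lemma~\ref{lem_partial}, Riesz representation, then verification that $\Pi$ is a bounded self-adjoint idempotent, with the shrinking property as a consequence. The only cosmetic difference is that you establish idempotency directly from the uniqueness of the Riesz representative ($\pi(\pi(X)\otimes I)=\pi(X)$) where the paper verifies $\Pi^2=\Pi$ weakly against arbitrary $Y\in L_2(\omega)$; these are equivalent by nondegeneracy of the inner product.
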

\begin{proof}[Proof of Lemma~\ref{lem_cond}]
  First note that $\avg{\el\otimes I,X}_\omega$ is bilinear with
  respect to $\el \in L_2(\tau)$ and $X \in L_2(\omega)$. For each
  $X \in L_2(\omega)$,
\begin{align}
\abs{\avg{\el\otimes I,X}_\omega} &\le  \norm{\el\otimes I}_\omega \norm{X}_\omega 
= \norm{\el}_\tau \norm{X}_\omega 
\label{bounded_functional}
\end{align} 
by the Cauchy-Schwarz inequality (CSI) and Lemma~\ref{lem_partial}, so
$\avg{\el\otimes I,X}_\omega$ is a bounded linear functional of $\el$,
and the Riesz representation theorem applies. Moreover, the theorem
gives
\begin{align}
\norm{\pi(X)}_\tau = 
\opnorm{\pi} = \sup_{\norm{\el}_\tau = 1} \abs{\avg{\el\otimes I,X}_\omega}
\le \norm{X}_\omega,
\label{bounded_map}
\end{align}
so $\pi:L_2(\omega)\to L_2(\tau)$ is a bounded linear map. 

Now consider the $\Pi$ map given by Eq.~(\ref{Pi}). It is linear,
because $\pi$ is linear. It is bounded, because, by
Lemma~\ref{lem_partial} and Eq.~(\ref{bounded_map}),
\begin{align}
\norm{\Pi X}_\omega &= \norm{\pi(X)}_\tau \le \norm{X}_\omega.
\end{align}
It is self-adjoint, because, for any
$X,Y \in L_2(\omega)$,
\begin{align}
\Avg{Y,\Pi X}_\omega &= \Avg{Y,\pi(X)\otimes I}_\omega 
= \Avg{\pi(Y),\pi(X)}_\tau
\\
&= \Avg{\pi(Y)\otimes I,X}_\omega
= \Avg{\Pi Y,X}_\omega.
\end{align}
It is also idempotent, because, for any $X,Y \in L_2(\omega)$,
\begin{align}
\Avg{Y,\Pi^2 X}_\omega &= \Avg{\Pi Y,\Pi X}_\omega = 
\Avg{\pi(Y),\pi(X)}_\tau \\
&= \Avg{Y,\Pi X}_\omega,
\end{align}
meaning that $\Pi^2 = \Pi$.  $\Pi$ is hence a projection
\cite[Theorem~4.7.7]{debnath05}, and Eq.~(\ref{contract}) is a basic
property.
\end{proof}

\begin{proof}[Proof of Lemma~\ref{lem_monotone}] Let
  $\mathfrak T(\mathcal H)$ be the Banach space of all trace-class
  self-adjoint operators on $\mathcal H$ \cite{holevo11}. The partial
  trace
  $\trace':\mathfrak T(\mathcal H\otimes\mathcal H') \to \mathfrak
  T(\mathcal H)$ is a bounded linear map \cite[p.~150]{davies} and
  therefore continuous \cite[Theorem~1.5.7]{debnath05}.  It follows
  that $\trace'$ is differentiable and the derivative (as a linear
  map) is given by $\trace'$ itself \cite[(8.1.3)]{dieudonne}. As
  $\{\omega_\theta\}$ is assumed to be regular, $\omega_\theta$ is
  differentiable at the truth. Thus,
  $\tau_\theta = \trace'\omega_\theta$ is also differentiable at the
  truth \cite[(8.2.1)]{dieudonne}, and
  $\dot\tau \in \mathfrak T(\mathcal H)$ is determined by the chain
  rule
\begin{align}
    \dot\tau &= \trace'\dot\omega.
\end{align}
Since $\{\omega_\theta\}$ is assumed to be regular, its score
$S \in L_2(\omega)$ exists. The definition of the partial trace as per
Eq.~(\ref{partial_trace}) and the definition of the score as per
Eq.~(\ref{score}) can then be used to give
\begin{align}
\trace \el \dot\tau &= \trace \bk{\el\otimes I} \dot\omega
= 
\Avg{\el\otimes I,S}_\omega
\quad
\forall \el \in \mathcal B(\mathcal H).
\label{pushforward_S}
\end{align}
The linear functional
$\trace \el \dot\tau = \avg{\el\otimes I,S}_\omega$ of $\el$ is
bounded with respect to $\norm{\el}_\tau$ by
Eq.~(\ref{bounded_functional}) and can therefore be extended uniquely
to be defined on $L_2(\tau)$. The score $\trace_*'S \in L_2(\tau)$ of
$\{\tau_\theta\}$ then exists by definition, $\{\tau_\theta\}$ is
regular, and 
\begin{align}
\Avg{\el,\trace_*'S}_\tau = \Avg{\el\otimes I,S}_\omega\quad
\forall \el \in L_2(\tau).
\end{align}
By Lemma~\ref{lem_cond}, the score must be equal to the conditional
expectation
\begin{align}
\trace'_* S &= \pi(S),
\end{align}
which observes the shrinking property
\begin{align}
\norm{\trace'_* S}_\tau 
&= \norm{\pi(S)}_\tau \le \norm{S}_\omega.
\end{align}
\end{proof}
\begin{remark}
  Throughout this paper, differentiability is always assumed to be in
  the Fr\'echet sense \cite{flett}.
\end{remark}

\section{\label{app_thm_qbound}Proof of Theorem~\ref{thm_qbound}}
Before proving Theorem~\ref{thm_qbound}, I need a few lemmas. The
first lemma gives $\supp P_\theta = \supp P_0$ under general conditions.

\begin{alemma}
\label{lem_support}
$g_\theta(x) > 0$ a.e.~$P_0$ on $\supp P_0$ implies that the
$P_\theta$ defined by Eq.~(\ref{P_theta}) and $P_0$ have the same
support.
\end{alemma}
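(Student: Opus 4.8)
The plan is to prove that $P_\theta$ and $P_0$ are mutually absolutely continuous and then invoke the characterization of the support as the smallest closed set of unit probability \cite{parthasarathy67}. First, Eq.~(\ref{P_theta}) displays $g_\theta$ as a Radon--Nikodym derivative of $P_\theta$ with respect to $P_0$, so $P_\theta \ll P_0$ is immediate: if $P_0(A) = 0$ then $P_\theta(A) = \int_A g_\theta(x) P_0(dx) = 0$.

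The substantive step, and the only place the hypothesis is genuinely used, is the reverse domination $P_0 \ll P_\theta$. Since $P_0(\supp P_0) = 1$, the assumption ``$g_\theta > 0$ a.e.\ $P_0$ on $\supp P_0$'' is the same as $P_0(\{x : g_\theta(x) = 0\}) = 0$. Suppose $P_\theta(A) = 0$, i.e.\ $\int_A g_\theta(x) P_0(dx) = 0$. As the integrand is nonnegative, Proposition~4.1.7 in Ref.~\cite{parthasarathy05} gives $g_\theta = 0$ a.e.\ $P_0$ on $A$, so $P_0(A \cap \{g_\theta > 0\}) = 0$; combined with $P_0(\{g_\theta = 0\}) = 0$ this yields $P_0(A) = 0$. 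Hence $P_0$ and $P_\theta$ are equivalent measures.

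Finally I would deduce equality of the supports. Write $C_0 = \supp P_0$ and $C_\theta = \supp P_\theta$; by the definition recalled in the proof of Lemma~\ref{lem_poly}, each is a closed set of unit probability for its measure, and is contained in any closed set of unit probability. Since $C_0$ is closed, $P_0(C_0) = 1$, and $P_\theta \ll P_0$, the set $\mathbb R \setminus C_0$ is $P_0$-null hence $P_\theta$-null, so $P_\theta(C_0) = 1$ and therefore $C_\theta \subseteq C_0$. Exchanging the roles of $P_0$ and $P_\theta$ and using $P_0 \ll P_\theta$ gives $C_0 \subseteq C_\theta$, so $C_0 = C_\theta$.

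Beyond that, the argument is routine measure theory; the main obstacle, such as it is, is simply making the passage from pointwise positivity of $g_\theta$ to the domination $P_0 \ll P_\theta$ airtight, which the a.e.-positivity hypothesis handles. In the intended application this hypothesis is automatic, since the $g_\theta$ of Eq.~(\ref{g}) is strictly positive everywhere (because $1 + \tanh > 0$); stating the lemma with the weaker a.e.\ condition just makes it slightly more general and keeps it reusable for any dominated submodel with a positive density.
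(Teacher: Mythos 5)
Your proof is correct, and it takes a mildly but genuinely different route from the paper's. The paper works directly with the two defining conditions of the support: it first checks $P_\theta(\supp P_0)=1$, and then shows that any closed strict subset $D \subset \supp P_0$ has $P_\theta(D)<1$ by arguing that $\int_{C-D} g_\theta\, dP_0$ cannot vanish (a vanishing integral of a nonnegative function would force $g_\theta=0$ a.e.~$P_0$ on $C-D$, contradicting the a.e.-positivity hypothesis since $P_0(C-D)>0$). You instead prove the stronger intermediate statement that $P_0$ and $P_\theta$ are mutually absolutely continuous, using exactly the same measure-theoretic ingredient (zero integral of a nonnegative function implies a.e.~vanishing, plus $P_0(\{g_\theta=0\})=0$) to get $P_0 \ll P_\theta$, and then invoke the standard fact that equivalent measures have identical supports via the minimality characterization. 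Your route buys a cleaner logical structure and a reusable conclusion --- the equivalence $P_0 \sim P_\theta$ is in fact what the main text appeals to when it asserts that each $P_\theta$ inherits the infinite bounded support of $P_0$ --- while the paper's route stays closer to the bare definition of the support and avoids stating the general support-equality fact. Both arguments are sound; the only point worth making explicit in yours is that the hypothesis ``$g_\theta>0$ a.e.~$P_0$ on $\supp P_0$'' upgrades to ``$P_0(\{g_\theta=0\})=0$'' precisely because $P_0(\mathbb R \setminus \supp P_0)=0$, which you do note.
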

\begin{proof}[Proof of Lemma~\ref{lem_support}]
  Write $C = \supp P_0$ for brevity in this proof.  First notice that,
  since $P_0(C) = 1$,
\begin{align}
P_\theta(C) &= \int_C g_\theta(x) P_0(dx) = 
\int g_\theta(x) P_0(dx) = 1,
\end{align}
and the first condition for $\supp P_\theta = C$ is satisfied. To
prove the second condition, consider, for any closed strict subset $D$
of $C$,
\begin{align}
P_\theta(D) &= \int_D g_\theta(x) P_0(dx) = 1 - \int_{C-D} g_\theta(x) P_0(dx).
\label{PD}
\end{align}
By the definition of $C$, $P_0(D) < 1$ and $P_0(C-D) > 0$.  I can then
prove that the last integral in Eq.~(\ref{PD}) is not zero by
contradiction. Suppose that it is zero. Then, by Proposition~4.1.7 in
Ref.~\cite{parthasarathy05},
\begin{align}
g_\theta(x) = 0 \quad \textrm{a.e.~$P_0$ on $C-D$}.
\label{g0ae}
\end{align}
This statement contradicts the assumption that $g_\theta(x) > 0$
a.e.~$P_0$ on $C$ by the following argument. The assumption implies
the existence of a set $Y_1$ with $P_0(Y_1) = 0$ such that
\begin{align}
g_\theta(x) > 0 \quad \forall x \in C - Y_1.
\label{g_pos}
\end{align}
On the other hand, Eq.~(\ref{g0ae}) implies the existence of a $Y_2$
with $P_0(Y_2) = 0$ such that
\begin{align}
g_\theta(x) = 0 \quad \forall x \in C-D-Y_2.
\label{g0}
\end{align}
The common domain $(C - Y_1)\cap (C-D-Y_2) = C- D - (Y_1 \cup Y_2)$ is
not empty because $P_0[C-D-(Y_1\cup Y_2)] = P_0(C-D) > 0$.  Thus, for
any $x$ in the common domain, the two statements in Eqs.~(\ref{g_pos})
and (\ref{g0}) contradict each other. By contradiction, the last
integral in Eq.~(\ref{PD}) must be strictly positive,
$P_\theta(D) < 1$, and the second condition for $\supp P_\theta = C$
is also satisfied.

\end{proof}
The second lemma provides some bounds on $g_\theta(x)$ given by
Eqs.~(\ref{g}) and (\ref{a_mu}), which lead to useful properties for
the submodels in Secs.~\ref{sec_submodel} and \ref{sec_qbound}.

\begin{alemma}
\label{lem_g}
For the $g_\theta(x)$ given by Eqs.~(\ref{g}) and (\ref{a_mu}), there
exist finite positive constants $c_3, c_4, c_5$ such that, for all
$|\theta| \le c < \infty$ and $|x| \le \Delta < \infty$,
\begin{align}
0 < c_3 \le g_\theta(x) &\le c_4 < \infty,
\label{c3}
\\
|\partial g_\theta(x)| &\le c_5 < \infty,
\label{c5}
\end{align}
where $\partial$ denotes $\partial/\partial\theta$. Moreover, defining
\begin{align}
\bar \el(\theta) &\equiv \int \el(x) g_\theta(x) P_0(dx)
\end{align}
for a $P_0$ that obeys Eq.~(\ref{Delta}) and a $P_0$-integrable
function $\el(x)$, $\partial$ can be taken inside the integral, viz.,
\begin{align}
\partial \bar \el(\theta) &= \int \el(x) \partial g_\theta(x) P_0(dx).
\label{hprime}
\end{align}
\end{alemma}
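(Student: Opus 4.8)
The plan is to exploit the boundedness of $S = a_\mu$ on the compact set $\{|x|\le\Delta\}$ together with the elementary estimates $|\tanh|<1$ and $0<\sech^2\le 1$. First I would set $M \equiv \sup_{|x|\le\Delta}|a_\mu(x)|$, which is finite because $a_\mu$ is a polynomial (Lemma~\ref{lem_poly}) and $[-\Delta,\Delta]$ is compact; hence $|\theta S(x)| \le cM$ for all $\theta,x$ under consideration. The numerator $N_\theta(x) \equiv 1 + \tanh[\theta S(x)]$ then satisfies $1-\tanh(cM) \le N_\theta(x) \le 1+\tanh(cM)$, and since $P_0$ is a probability measure, the denominator $Z(\theta) \equiv \int \{1+\tanh[\theta S(x)]\}P_0(dx)$ obeys the same two-sided bound. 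Because $1-\tanh(cM) > 0$, the ratio $g_\theta(x) = N_\theta(x)/Z(\theta)$ is squeezed between $c_3 \equiv [1-\tanh(cM)]/[1+\tanh(cM)]$ and $c_4 \equiv [1+\tanh(cM)]/[1-\tanh(cM)]$, both finite and positive, which gives Eq.~(\ref{c3}).

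For Eq.~(\ref{c5}) I would differentiate $g_\theta$ via the quotient rule. The pointwise derivative $\partial N_\theta(x) = S(x)\sech^2[\theta S(x)]$ is bounded in absolute value by $M$, uniformly in $\theta$ and $x$. Since the constant $M$ is $P_0$-integrable and $\theta\mapsto N_\theta(x)$ is smooth for each fixed $x$, the dominated-convergence form of the differentiation-under-the-integral theorem applies and yields $Z'(\theta) = \int S(x)\sech^2[\theta S(x)]\,P_0(dx)$ with $|Z'(\theta)| \le M$. Feeding the bounds on $N_\theta$, $Z$, $\partial N_\theta$, $Z'$ and the lower bound $Z(\theta) \ge 1-\tanh(cM)$ into the quotient-rule expression for $\partial g_\theta(x)$ produces an explicit finite constant $c_5$.

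Finally, for the interchange in Eq.~(\ref{hprime}) I would again invoke the standard differentiation-under-the-integral theorem: for each $x$ the map $\theta \mapsto \el(x) g_\theta(x)$ is differentiable, and by the bound just obtained $|\el(x)\,\partial g_\theta(x)| \le c_5 |\el(x)|$ uniformly in $\theta$, where $c_5|\el|$ is $P_0$-integrable because $\el$ is assumed $P_0$-integrable and $c_5$ is a finite constant. That dominating function legitimizes moving $\partial/\partial\theta$ inside the integral.

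I expect no serious obstacle here; the only points needing care are (i) recognizing that the domination required for the two differentiation-under-the-integral steps is exactly what the uniform bounds from the first two parts supply, so the statements should be established in the order $c_3,c_4 \to c_5 \to$ Eq.~(\ref{hprime}), and (ii) tracking which quantities are bounded uniformly over both $\theta$ and $x$ rather than merely pointwise.
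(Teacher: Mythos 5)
Your proof is correct and follows essentially the same route as the paper: uniform two-sided bounds on the numerator $1+\tanh[\theta a_\mu(x)]$ over the compact domain $|\theta|\le c$, $|x|\le\Delta$ (the paper gets them via the extreme value theorem, you via the explicit bound $\tanh(cM)$ with $M=\sup_{|x|\le\Delta}|a_\mu(x)|$), the same bounds for the denominator since $P_0$ is a probability measure supported in $[-\Delta,\Delta]$, and dominated convergence for both differentiation-under-the-integral steps. If anything, you are slightly more explicit than the paper about needing to differentiate the denominator $Z(\theta)$ under the integral before applying the quotient rule, a step the paper compresses into ``the proof of Eq.~(\ref{c5}) is similar.''
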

\begin{proof}[Proof of Lemma~\ref{lem_g}]
Let
\begin{align}
\gamma_\theta(x) &\equiv 1 + \tanh[\theta a_\mu(x)]
\end{align}
be the numerator of $g_\theta(x)$.  $\gamma_\theta(x)$ is obviously
continuous with respect to $\theta$ and $x$. For all finite $\theta$
and $x$,
\begin{align}
0 < \gamma_\theta(x) < 2.
\end{align}
By the extreme value theorem \cite[Theorem~4.16]{rudin_baby}, there
exist $\theta_1,\theta_2 \in [-c,c]$ and
$x_1,x_2 \in [-\Delta,\Delta]$ such that
\begin{align}
\inf_{|\theta| \le c, |x| \le \Delta} \gamma_\theta(x) &= \gamma_{\theta_1}(x_1),
\\
\sup_{|\theta|\le c, |x| \le \Delta} \gamma_\theta(x) &= \gamma_{\theta_2}(x_2).
\end{align}
It follows that
\begin{align}
0 < c_3 \equiv \frac{\gamma_{\theta_1}(x_1)}{\gamma_{\theta_2}(x_2)}
\le g_\theta(x) \le \frac{\gamma_{\theta_2}(x_2)}{\gamma_{\theta_1}(x_1)}
\equiv c_4 < \infty.
\end{align}
The proof of Eq.~(\ref{c5}) is similar, by noting that
$\partial\gamma_\theta(x)$ is also continuous and bounded.
Equation~(\ref{c5}) then implies Eq.~(\ref{hprime}) by a corollary of
the dominated convergence theorem \cite[Corollary~2.8.7]{bogachev}.

\end{proof}

Next, I define what it means exactly for a quantity to be a function
of $\Delta$ and also normalize the $H, A, \chol,\dot H$ matrices to
remove their dependence on $\Delta$, for later use.

\begin{definition}
\label{def_R}
Let a function $T_\Delta:\mathbb R\to\mathbb R$ be
\begin{align}
T_\Delta(x) \equiv \frac{x}{\Delta}.
\end{align}
Define a standard measure $R_\theta$ on $(\mathbb R,\Sigma)$
to describe the distribution of the object with standard size as
\begin{align}
R_\theta(\cdot) &= P_\theta\Bk{T_\Delta^{-1}\bk{\cdot}},
\end{align}
such that
\begin{align}
\sup_{y \in \supp R_\theta} |y| &= 1.
\label{supp_standard}
\end{align}
As $T_\Delta$ is invertible, 
\begin{align}
P_\theta(\cdot) = R_\theta\Bk{T_\Delta\bk{\cdot}}.
\label{t_Delta}
\end{align}
A function of the true $P_0$ is said to be a function of $\Delta$ if
$R_0$ is fixed while $P_0$ varies with $\Delta$ through
Eq.~(\ref{t_Delta}).

The Hankel matrix for $R_\theta$ is
\begin{align}
G_{pq}(\theta) &\equiv \Avg{y^p,y^q}_{R_\theta}=
 \Delta^{-p-q} H_{pq}(\theta).
\end{align}
The orthonormal polynomials with respect to $R_\theta$,
satisfying $\avg{b_n,b_m}_{R_\theta} = \delta_{nm}$, are given by
\begin{align}
b_{n}(y) &= \sum_{p=0}^n B_{np}(\theta) y^p,
&
B_{np}(\theta) &= A_{np}(\theta)\Delta^p.
\end{align}
Let $V(\theta)$ be the Cholesky factor of $G(\theta)$. It is given by
\begin{align}
V_{pn}(\theta) &= \Avg{y^p,b_n}_{R_\theta} = \Delta^{-p}\chol_{pn}(\theta).
\label{V}
\end{align}
At $\theta = 0$, assuming the score given by Eq.~(\ref{a_mu}),
$\dot G_{pq}$ is given by
\begin{align}
\dot G_{pq} &= \Avg{y^{p+q},b_\mu}_{R_0} = V_{p+q\,\mu}(0).
\label{dot_G}
\end{align}
\end{definition}

I need another lemma to evaluate the derivative of the purification
given by Eq.~(\ref{Phi_theta}).
\begin{alemma}
\label{lem_diff}
Assume the conditions for Theorem~\ref{thm_qbound} and consider the
submodel given by Eqs.~(\ref{P_theta})--(\ref{theta}), (\ref{a_mu}),
(\ref{Phi_theta}), and (\ref{H_theta}).  Let
\begin{align}
\ket{\Phi^j_\theta} &\equiv
\sum_{p=0}^j \sum_{n=0}^p \frac{(-i\hat k)^p}{p!}
\chol_{pn}(\theta)\ket{\psi}\otimes\ket{n} 
\label{PhiN_theta}
\end{align}
be a function of $\theta \in (-c,c)$ with values in
$\mathcal H\otimes\mathcal H''$.
\begin{enumerate}
\item $\{\ket{\Phi^j_\theta}\}$ converges uniformly to 
  the $\ket{\Phi_\theta}$ given by Eq.~(\ref{Phi_theta}),
in the sense that
\begin{align}
\lim_{j\to\infty}
\sup_{\theta \in (-c,c)} \Norm{\ket{\Phi^j_\theta}- \ket{\Phi_\theta}} &= 0.
\end{align}

\item For any $j$ and $\theta$, the derivative of
  Eq.~(\ref{PhiN_theta}) with respect to $\theta$ exists 
in $\mathcal H\otimes\mathcal H''$ and is given
  by
\begin{align}
\partial\ket{\Phi^j_\theta} &= 
\sum_{p=0}^j \sum_{n=0}^p \frac{(-i\hat k)^p}{p!}
\partial\chol_{pn}(\theta)\ket{\psi}\otimes\ket{n}.
\label{PhiN_prime}
\end{align}
\item $\partial\ket{\Phi_\theta}$ exists in
  $\mathcal H\otimes\mathcal H''$, and
  $\{\partial\ket{\Phi^j_\theta}\}$ converges uniformly to it.
\end{enumerate}
\end{alemma}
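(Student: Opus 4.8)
The plan is to handle the three claims in order, reducing everything to uniform-in-$\theta$ tail estimates for the $p$-th blocks of the series and then invoking the classical term-by-term differentiation argument. For claim~1, I would regroup Eq.~(\ref{PhiN_theta}) by $p$, writing its $p$-th block as the simple tensor $\ket{w_p(\theta)}\equiv(p!)^{-1}(-i\hat k)^p\ket{\psi}\otimes\sum_{n=0}^p\chol_{pn}(\theta)\ket{n}$, so that $\ket{\Phi^j_\theta}=\sum_{p=0}^j\ket{w_p(\theta)}$ and $\ket{\Phi_\theta}=\sum_{p=0}^\infty\ket{w_p(\theta)}$ (the latter by Lemma~\ref{lem_pur} applied to $P_\theta$). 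The norm of a simple tensor factorizes, and $\sum_{n=0}^p\chol_{pn}(\theta)^2=(\chol\chol^\top)_{pp}=H_{pp}(\theta)=\int x^{2p}P_\theta(dx)\le\Delta^{2p}$ by Lemma~\ref{lem_support} (so $\supp P_\theta=\supp P_0$, on which $|x|\le\Delta$) and the boundedness of $\hat k$; hence $\norm{\ket{w_p(\theta)}}\le(\opnorm{\hat k}\Delta)^p/p!$ uniformly in $\theta$, and $\norm{\ket{\Phi_\theta}-\ket{\Phi^j_\theta}}\le\sum_{p>j}(\opnorm{\hat k}\Delta)^p/p!\to0$ as $j\to\infty$, uniformly in $\theta$.

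For claim~2, $H_{pq}(\theta)=\int x^{p+q}g_\theta(x)P_0(dx)$ is $C^1$ in $\theta$ by Lemma~\ref{lem_g} (Eq.~(\ref{hprime}) plus continuity of $\partial g_\theta$), and by Lemma~\ref{lem_poly} the truncation $H_{(j)}(\theta)$ is positive-definite for every $\theta$. Thus for fixed $j$ the entries $\chol_{pn}(\theta)$, $p,n\le j$, are $C^1$ (indeed $C^\infty$) functions of $\theta$, being produced from the entries of $H_{(j)}(\theta)$ by the Cholesky recursion, whose only operations are additions, multiplications, divisions by the positive pivots, and square roots of positive quantities. Since $\ket{\Phi^j_\theta}$ is a finite linear combination of fixed vectors with these coefficients, it is Fr\'echet-differentiable with the termwise derivative Eq.~(\ref{PhiN_prime}), and $\theta\mapsto\partial\ket{\Phi^j_\theta}$ is continuous.

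Claim~3 is the crux; the main obstacle is a uniform-in-$(\theta,p)$ bound on the coefficient derivatives, which I would get without estimating $\partial\chol_{pn}$ entrywise (that route would drag in Szeg\H{o}-type bounds on the orthonormal polynomials and their leading coefficients $\chol_{mm}$). Differentiating $\chol\chol^\top=H$ and using lower-triangularity gives the standard closed form: on the $(p+1)\times(p+1)$ submatrices that determine $\chol_{pn}$ for $n\le p$, $\partial\chol_{(p)}=\chol_{(p)}N_{(p)}$, where $N_{(p)}$ is lower-triangular with entries $W_{mn}$ for $m>n$ and $\tfrac12 W_{mm}$ on the diagonal, and $W_{(p)}=A_{(p)}[\partial H_{(p)}(\theta)]A_{(p)}^\top$ with $A_{(p)}=\chol_{(p)}^{-1}$ the Gram--Schmidt matrix of Lemma~\ref{lem_poly}. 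The key step is the integral representation $W_{mn}=\int a_m(x)a_n(x)\,\partial g_\theta(x)\,P_0(dx)$, with $a_m$ the orthonormal polynomials of $P_\theta$; since $|\partial g_\theta|\le c_5$ and $g_\theta\ge c_3>0$ imply $P_0(dx)\le c_3^{-1}P_\theta(dx)$ (Lemma~\ref{lem_g}), Cauchy--Schwarz and orthonormality give $|W_{mn}|\le(c_5/c_3)\int|a_m a_n|\,dP_\theta\le c_5/c_3$ uniformly in $m,n,p,\theta$. Hence $\fnorm{N_{(p)}}\le(c_5/c_3)(p+1)$, and by Cauchy--Schwarz $\sum_{n=0}^p\partial\chol_{pn}(\theta)^2\le\bigl(\sum_m\chol_{pm}^2\bigr)\fnorm{N_{(p)}}^2=H_{pp}(\theta)\fnorm{N_{(p)}}^2\le(c_5/c_3)^2(p+1)^2\Delta^{2p}$. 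Therefore the $p$-th block of the termwise-differentiated series has norm at most $(c_5/c_3)(p+1)(\opnorm{\hat k}\Delta)^p/p!$, so $\{\partial\ket{\Phi^j_\theta}\}$ is uniformly Cauchy in $\theta$ (tail of $\sum_p(p+1)z^p/p!$) and converges uniformly to some continuous $g(\theta)$. Finally, claim~2 gives $\ket{\Phi^j_\theta}=\ket{\Phi^j_{\theta_0}}+\int_{\theta_0}^\theta\partial\ket{\Phi^j_s}\,ds$; letting $j\to\infty$ and using claim~1 on the boundary terms and uniform convergence of the integrand on the compact integration interval, $\ket{\Phi_\theta}=\ket{\Phi_{\theta_0}}+\int_{\theta_0}^\theta g(s)\,ds$. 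Since $g$ is continuous, $\ket{\Phi_\theta}$ is $C^1$ with $\partial\ket{\Phi_\theta}=g(\theta)$, which is the uniform limit of $\partial\ket{\Phi^j_\theta}$.
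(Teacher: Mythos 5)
Your proof is correct, and on the crucial point it takes a genuinely different — and sharper — route than the paper. The paper also starts from the Cholesky-derivative identity $\partial\chol_{(p)}=\chol_{(p)}N_{(p)}$ with $N_{(p)}$ built from $D=B_{(p)}[\partial G_{(p)}]B_{(p)}^\top$, but it controls $D$ by the operator-norm estimate $\fnorm{D}\le\fnorm{\partial G_{(p)}}\,\opnorm{B_{(p)}}^2=\fnorm{\partial G_{(p)}}/\lambda_{\mathrm{min}}[G_{(p)}]$, and then has to invoke the Widom--Wilf asymptotics $\lambda_{\mathrm{min}}[G_{(p)}(0)]=\Theta(\sqrt{p}\,r^p)$ — this is the \emph{only} place the Szeg\H{o}-class hypothesis is used, and the resulting geometric blow-up is beaten only by the $1/p!$ in the series. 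Your elementwise representation $D_{mn}=\int a_m a_n\,\partial g_\theta\,dP_0$, combined with $|\partial g_\theta|\le c_5$, $dP_0\le c_3^{-1}dP_\theta$, and orthonormality of the $a_m$ under $P_\theta$, gives $|D_{mn}|\le c_5/c_3$ uniformly, hence $\fnorm{N_{(p)}}=O(p)$ with no spectral input at all; together with $\sum_n(\partial\chol_{pn})^2\le H_{pp}\fnorm{N_{(p)}}^2\le\Delta^{2p}\fnorm{N_{(p)}}^2$ this makes the differentiated series uniformly and absolutely summable under only the infinite-bounded-support assumption. This realizes exactly the relaxation the paper's closing remark asks for: the Szeg\H{o}-class condition becomes unnecessary for this lemma (and hence, per that remark, for Theorem~\ref{thm_qbound}). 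The remaining differences are cosmetic: you prove claim~1 directly from the tail bound $\sum_{p>j}(\opnorm{\hat k}\Delta)^p/p!$ rather than deducing it from convergence at a point plus Flett's Theorem~(1.7.1), and your final fundamental-theorem-of-calculus argument is essentially a self-contained re-proof of that theorem; both are fine, provided you keep the observation (which you make) that $\theta\mapsto\partial\ket{\Phi^j_\theta}$ is continuous so that the Banach-space-valued integral is legitimate.
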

\begin{proof}[Proof of Lemma~\ref{lem_diff}]
  Since each measure in the Szeg\H{o} class has an infinite and
  bounded support by definition, $P_0$ satisfies the condition for
  Lemma~\ref{lem_pur} by assumption. By Lemmas~\ref{lem_support} and
  \ref{lem_g}, $\supp P_\theta = \supp P_0$, so Lemma~\ref{lem_pur}
  can be applied to the whole submodel, and Eqs.~(\ref{Phi_theta}) and
  (\ref{PhiN_theta}) are well defined. To prove statement 2 of the
  lemma, note that $\partial L_{pn}$ is given by the formula
  \cite[Theorem~A.1]{sarkka}
\begin{align}
\partial\chol_{pn} &= \Delta^p \partial V_{pn},
\label{dV}
\\
\partial V_{pn} &= \sum_{m=0}^p \sum_{q=0}^m \sum_{r=0}^n V_{pm}
W_{mn} B_{mq} (\partial G_{qr}) B_{nr},
\label{sarkka}
\\
W_{mn} &\equiv \begin{cases}1, & n < m,\\
1/2, & n = m,\\
0, & n > m,\end{cases}
\end{align}
in terms of the normalized quantities in Definition~\ref{def_R}.  By
Lemma~\ref{lem_g} and Eq.~(\ref{Delta}),
\begin{align}
|\partial G_{qr}| &= \abs{\int \bk{\frac{x}{\Delta}}^{q+r}\partial g_\theta(x) P_0(dx)}
\le c_5,
\label{dG}
\end{align}
so $|\partial \chol_{pn}| < \infty$, $\partial\ket{\Phi_\theta^j}$
exists, and the basic rules of differentiation give
Eq.~(\ref{PhiN_prime}) \cite[(1.1.3) and (1.1.4)]{flett}.

To prove statements 1 and 3, the plan is to prove that the sequence of
functions satisfy the two conditions for Theorem~(1.7.1) in
Ref.~\cite{flett}, which implies the statements. The first condition
is that $\{\ket{\Phi_\theta^j}\}$ converges to $\ket{\Phi_\theta}$ at
a point $\theta$. This condition is satisfied, as
Lemmas~\ref{lem_support}, \ref{lem_g}, and \ref{lem_pur} imply the
pointwise convergence at any $\theta$. The second condition is the
uniform convergence of $\{\partial\ket{\Phi_\theta^j}\}$, which I
prove next. Suppose $J > j$ and consider
\begin{align}
&\quad \Norm{\partial\ket{\Phi^J_\theta}  - \partial\ket{\Phi^j_\theta} }^2
\label{cauchy_Phi}
\\
&= 
\Norm{\sum_{p=j+1}^J \sum_{n=0}^J
\frac{(-i\hat k)^p}{p!}\partial\chol_{pn}\ket{\psi}\otimes\ket{n}}^2
\label{step_cauchy}
\\
&= \sum_{n=0}^J
\sum_{q=j+1}^J \sum_{p=j+1}^J \frac{i^q(-i)^p}{q!p!}
\bra{\psi}\hat k^{q+p}\ket{\psi}(\partial\chol_{qn})(\partial\chol_{pn})
\label{step_series}
\\
&\le 
\sum_{n=0}^J \bk{\sum_{p=j+1}^J \frac{\opnorm{\hat k}^p}{p!} \partial\chol_{pn}}^2
\label{step_opnorm}
\\
&\le 
\sum_{n=0}^J \Bk{\sum_{p=j+1}^J \frac{(\opnorm{\hat k}\Delta)^{2p}}{p!}}
\sum_{p=j+1}^J \frac{(\partial V_{pn})^2}{p!}
\label{step_csi}
\\
&= 
\underbrace{\Bk{\sum_{p=j+1}^J \frac{(\opnorm{\hat k}\Delta)^{2p}}{p!}}}
_{s_1^J - s_1^j}
\underbrace{\sum_{p=j+1}^J \frac{1}{p!}\sum_{n=0}^p \bk{\partial V_{pn}}^2}
_{s_2^J(\theta)-s_2^j(\theta)},
\label{cauchy_s}
\end{align}
where Eq.~(\ref{step_cauchy}) has used the lower-triangularity of
$\partial\chol_{pn}$ to replace $\sum_{n=0}^p$ by $\sum_{n=0}^J$,
Eq.~(\ref{step_opnorm}) has used the series of absolute values to
bound Eq.~(\ref{step_series}) and the definition of the operator norm
to bound
$|\bra{\psi}\hat k^{q+p}\ket{\psi}| \le \opnorm{\hat k}^{q+p}$,
Eq.~(\ref{step_csi}) has used the CSI and Eq.~(\ref{dV}),
and $s_1^j$ and $s_2^j(\theta)$ are defined as
\begin{align}
s_1^j &\equiv \sum_{p=0}^j \frac{(\opnorm{\hat k}\Delta)^{2p}}{p!},
\\
s_2^j(\theta) &\equiv \sum_{p=0}^j \frac{1}{p!}\sum_{n=0}^p [\partial V_{pn}(\theta)]^2.
\label{s2}
\end{align}
If $\{s_1^j\}$ converges and $\{s_2^j(\theta)\}$ converges uniformly,
then Eq.~(\ref{cauchy_Phi}) can be bounded uniformly. $\{s_1^j\}$
converges to $\exp[(\opnorm{\hat k}\Delta)^2]$ for a bounded $\hat k$,
so it remains to be proved that $\{s_2^j(\theta)\}$ converges
uniformly.

To bound $(\partial V_{pn})^2$, apply the CSI to 
Eq.~(\ref{sarkka}) to obtain
\begin{align}
(\partial V_{pn})^2 &\le \Bk{\sum_{m=0}^p (V_{pm})^2}
\sum_{m=0}^p \bk{W_{mn} D_{mn}}^2,
\\
D &\equiv B_{(p)} [\partial G_{(p)}] B_{(p)}^\top,
\\
\sum_{n=0}^p (\partial V_{pn})^2  &\le G_{pp} \sum_{n=0}^p \sum_{m=0}^p 
\bk{W_{mn}D_{mn}}^2
\label{step_Hpp}
\\
&\le \frac{G_{pp}}{2} \fnorm{D}^2 \le \frac{1}{2} \fnorm{D}^2,
\label{step_HS}
\end{align}
where $G_{pp} = \sum_{m=0}^p(V_{pm})^2$ in Eq.~(\ref{step_Hpp})
comes from the fact that $V$ is the Cholesky factor of $G$,
$\fnorm{\cdot}$ is the Hilbert-Schmidt norm (also called the Frobenius
norm)
\begin{align}
\fnorm{D}^2 &\equiv \sum_{m,n} (D_{mn})^2,
\end{align}
the first bound in Eq.~(\ref{step_HS}) comes from the definition of
$W$ and the symmetry of $D$ \cite[Eq.~(B34)]{tsang19}, and the last
bound in Eq.~(\ref{step_HS}) comes from
\begin{align}
G_{pp} &= \norm{y^p}_{R_\theta}^2 \le 1,
\end{align}
by Eq.~(\ref{supp_standard}). $\fnorm{D}$ can be further bounded in
terms of $\fnorm{\partial G_{(p)}}$ and the operator norm
$\opnorm{B_{(p)}}$ as \cite[Theorem~II.2.11 and
Theorem~II.3.9]{stewart}
\begin{align}
\fnorm{D}  &\le \fnorm{\partial G_{(p)}} \opnorm{B_{(p)}}^2
= \frac{\fnorm{\partial G_{(p)}}}{\lambda_{\mathrm{min}}[G_{(p)}]},
\label{norm_D}
\end{align}
where $\lambda_{\mathrm{min}}$ denotes the smallest eigenvalue and the
last step has used $B_{(p)}^\top B_{(p)} = G_{(p)}^{-1}$ from
Lemma~\ref{lem_poly} and Eq.~(II.2.10) in Ref.~\cite{stewart} to
obtain
\begin{align}
\opnorm{B_{(p)}}^2 &= \lambda_{\mathrm{max}}[G_{(p)}^{-1}] = 
\frac{1}{\lambda_{\mathrm{min}}[G_{(p)}]}.
\end{align}
To bound $\lambda_{\mathrm{min}}[G_{(p)}]$, first note that, for any column
vector $u \in \mathbb R^p$,
\begin{align}
\el^\top G_{(p)}(\theta) \el &= 
\int \Bk{\sum_{q = 0}^p \el_q \bk{\frac{x}{\Delta}}^q}^2 g_\theta(x) P_0(dx)
\\
&\ge c_3 \el^\top G_{(p)}(0) \el,
\label{step_c3}
\end{align}
where $c_3 > 0$ comes from Lemma~\ref{lem_g}. Applying the Rayleigh
quotient theorem \cite[Theorem~4.2.2]{horn}
\begin{align}
\lambda_{\mathrm{min}}[G_{(p)}(\theta)] &= 
\min_{\el^\top \el = 1} \el^\top G_{(p)}(\theta) \el
\end{align}
to Eq.~(\ref{step_c3}), I obtain
\begin{align}
\lambda_{\mathrm{min}}[G_{(p)}(\theta)] &\ge c_3 \lambda_{\mathrm{min}}[G_{(p)}(0)].
\end{align}
If $P_0$ is in the Szeg\H{o} class, $R_0$ is also in the Szeg\H{o}
class, and by a theorem of Widom and Wilf \cite{widom66} there exists
a constant $0 < r < 1$ such that $\lambda_{\mathrm{min}}$ of its
Hankel matrix satisfies
\begin{align}
\lambda_{\mathrm{min}}[G_{(p)}(0)] &=  \Theta(\sqrt{p}r^p),
\label{eigenvalue_szego}
\end{align}
where the $\Theta$ order here is in terms of $p \to \infty$.  To bound
the other quantity $\fnorm{\partial G_{(p)}}$ in Eq.~(\ref{norm_D}), use
Eq.~(\ref{dG}) to write
\begin{align}
\fnorm{\partial G_{(p)}}^2 &\le
\sum_{n=0}^p \sum_{m=0}^p c_5^2  = c_5^2 (p+1)^2.
\end{align}
Putting everything together, $s_2^j(\theta)$ in Eq.~(\ref{s2}) can be
bounded as
\begin{align}
s_2^j(\theta) &\le 
\sum_{p=0}^j 
\frac{c_5^2(p+1)^2}{p!2 [c_3\Theta(\sqrt{p}r^p)]^2}.
\label{step_final_sum}
\end{align}
The right-hand side does not depend on $\theta$ and converges by the
ratio test, so $\{s_2^j(\theta)\}$ converges uniformly by the
Weierstrass test \cite[Theorem~7.10]{rudin_baby}.

With the convergence of $\{s_1^j\}$ and the uniform convergence of
$\{s_2^j(\theta)\}$, given any $\varepsilon > 0$, there exists a
$\theta$-independent $j_0$ such that, for all $\theta \in (-c,c)$ and
$J\ge j \ge j_0$, $s_1^J - s_1^j < \varepsilon$,
$s_2^J(\theta)-s_2^j(\theta) < \varepsilon$, and by
Eqs.~(\ref{cauchy_Phi})--(\ref{cauchy_s}),
\begin{align}
\Norm{\partial\ket{\Phi^J_\theta}  - \partial\ket{\Phi^j_\theta} } &< \varepsilon,
\label{cauchy}
\\
\sup_{\theta \in (-c,c)}
\Norm{\partial\ket{\Phi^J_\theta}  - \partial\ket{\Phi^j_\theta} } &\le \varepsilon,
\end{align}
which means that $\{\partial\ket{\Phi_\theta^j}\}$ is Cauchy in the
space of bounded Hilbert-space-valued functions
$\mathcal B_{\mathcal H\otimes \mathcal H''}[(-c,c)]$ with the
supremum norm $\sup_{\theta \in (-c,c)}\norm{\ket{\el_\theta}}$ for
each
$\ket{\el_\theta} \in \mathcal B_{\mathcal H\otimes \mathcal
  H''}[(-c,c)]$ \cite[Sec.~7.1]{dieudonne}.  As the Hilbert space
$\mathcal H\otimes\mathcal H''$ is a Banach space,
$\mathcal B_{\mathcal H\otimes \mathcal H''}[(-c,c)]$ is also a Banach
space \cite[(7.1.3)]{dieudonne}, the completeness of which means that
the Cauchy $\{\partial\ket{\Phi^j_\theta}\}$ converges
uniformly. Hence, the two conditions for Theorem~(1.7.1) in
Ref.~\cite{flett} are satisfied, and the theorem implies statements 1
and 3 of the lemma here.
\end{proof}

\begin{proof}[Proof of Theorem~\ref{thm_qbound}] Consider the submodel
  given by Eqs.~(\ref{P_theta})--(\ref{theta}), (\ref{a_mu}),
  (\ref{Phi_theta}), and (\ref{H_theta}). Let
  $\ket{\dot\Phi} \equiv \partial\ket{\Phi_\theta}|_{\theta = 0}$. The
  Helstrom information of the purified model is the Fubini-Study
  metric \cite{fujiwara95}
\begin{align}
\norm{S_\Phi}^2_{\Phi_0} &=  
4 \bk{\braket{\dot\Phi|\dot\Phi} - |\braket{\dot\Phi|\Phi_0}|^2}
\le 4 \braket{\dot\Phi|\dot\Phi}.
\label{loose_bound}
\end{align}
Lemma~\ref{lem_diff} implies that, under the conditions for
Theorem~\ref{thm_qbound}, $\braket{\dot\Phi|\dot\Phi} < \infty$,
$\{\ket{\dot\Phi^j} \equiv \partial\ket{\Phi^j_\theta}|_{\theta =
  0}\}$ converges to $\ket{\dot\Phi}$, and therefore the squared norm
also converges to \cite[Eq.~(3.9)]{debnath05}
\begin{align}
\braket{\dot\Phi^j|\dot\Phi^j} &\to \braket{\dot\Phi|\dot\Phi} < \infty.
\label{helstrom_limit}
\end{align}
With Eqs.~(\ref{k_spectral}), (\ref{Q}), and (\ref{PhiN_prime}), I can
write
\begin{align}
\braket{\dot\Phi^j|\dot\Phi^j}
= \int \sum_{n=0}^j \abs{\sum_{p=0}^j \frac{(-ik)^p}{p!}\dot \chol_{pn}}^2 Q(dk).
\label{dotPhiN}
\end{align}
To evaluate this expression as a function of $\Delta$, write the
formula for $\dot L_{pn}$ in Eqs.~(\ref{dV}) and (\ref{sarkka}) at
$\theta = 0$ as
\begin{align}
\dot\chol_{pn} &= \Delta^p \sum_{m=0}^p \sum_{q=0}^m \sum_{r=0}^n V_{pm}
W_{mn} B_{mq} V_{q+r\,\mu} B_{nr},
\label{sarkka2}
\end{align}
where $\dot G_{qr} = V_{q+r\,\mu}$ comes from Eq.~(\ref{dot_G}).  As
$V, W, B$ are all lower-triangular, for $\dot\chol_{pn}$ to be nonzero,
the indices in Eq.~(\ref{sarkka2}) should satisfy
\begin{align}
p &\ge m \ge n \ge r,
&
m &\ge q, & q + r \ge \mu,
\label{indices}
\end{align}
which imply $2m \ge q+m \ge q+r \ge \mu$ and
\begin{align}
p&\ge m \ge \ceil{\frac{\mu}{2}}.
\end{align}
Thus,
\begin{align}
\dot\chol_{pn} &= 0 \quad \textrm{if }p < \ceil{\frac{\mu}{2}}.
\label{dotL}
\end{align}
With $\dot \chol_{pn} \propto \Delta^p$, Eq.~(\ref{dotPhiN}) is a
power series of $\Delta$.  By Eq.~(\ref{helstrom_limit}), the power
series converges, so I can conclude from Eqs.~(\ref{helstrom_limit}),
(\ref{dotPhiN}), and (\ref{dotL}) that \cite{miller06}
\begin{align}
\braket{\dot\Phi|\dot\Phi} &= O(\Delta^{2\ceil{\mu/2}}).
\label{gamma_order}
\end{align}
To evaluate the Helstrom bound, I also need $\dot\beta$.
Given Eqs.~(\ref{inner_beta}), (\ref{a_mu}),
and (\ref{b_gen}), it is given by
\begin{align}
\dot\beta &= \Avg{a_\mu,x^\mu+o(\Delta^\mu)}_{P_0} =
\Delta^\mu V_{\mu\mu}+\Avg{a_\mu,o(\Delta^\mu)}_{P_0}
\nonumber\\
&= \Theta(\Delta^\mu),
\label{dot_beta_order}
\end{align}
where the second equality comes from Lemma~\ref{lem_poly} and
Eq.~(\ref{V}) and the final equality comes from $V_{\mu\mu} > 0$ by
Lemma~\ref{lem_poly} and the CSI
\begin{align}
\abs{\Avg{a_\mu,o(\Delta^\mu)}_{P_0}} \le 
\norm{a_\mu}_{P_0} \norm{o(\Delta^\mu)}_{P_0}= o(\Delta^\mu).
\end{align}
Finally, the theorem is given by
\begin{align}
\tilde{\mathsf H} &\ge \mathsf H \ge \frac{\dot\beta^2}{N\norm{S_\Phi}_{\Phi_0}^2}
\ge \frac{\dot\beta^2}{4N\braket{\dot\Phi|\dot\Phi}} 
= \frac{\Omega(\Delta^{2\floor{\mu/2}})}{N},
\end{align}
where the first inequality comes from the submodel bound in
Lemma~\ref{lem_submodel}, the second inequality comes from applying
the purification bound in Lemma~\ref{lem_monotone} to the purification
in Lemma~\ref{lem_pur}, the third inequality comes from
Eq.~(\ref{loose_bound}), and the final equality comes from
Eqs.~(\ref{gamma_order}) and (\ref{dot_beta_order}).
\end{proof}

\begin{remark}
  Although the assumption of an infinite and bounded $\supp P_0$ is
  central to Lemma~\ref{lem_pur}, Lemma~\ref{lem_g}, and
  Theorem~\ref{thm_qbound}, the more specific Szeg\H{o}-class
  assumption is used only in Eqs.~(\ref{eigenvalue_szego}) and
  (\ref{step_final_sum}) in Lemma~\ref{lem_diff}, to ensure the
  convergence of the Helstrom information of the purified model. If
  $\lambda_{\mathrm{min}}[G_{(p)}(0)]$ can be lower-bounded in another
  way that still ensures the uniform convergence of
  $\{s_2^j(\theta)\}$ in Eq.~(\ref{step_final_sum}), then
  Lemma~\ref{lem_diff} and Theorem~\ref{thm_qbound} still hold, and
  the Szeg\H{o}-class assumption may be relaxed. For example,
  Ref.~\cite[Theorem~3]{widom67} gives the asymptotic behavior of
  $\lambda_{\mathrm{min}}[G_{(p)}(0)]$ for a more general class of
  $P_0$ that still makes $\{s_2^j(\theta)\}$ converge uniformly,
  although the conditions there are much harder to state or check.
\end{remark}

\section{\label{app_prop_direct}Proof of
  Proposition~\ref{prop_direct}}
\begin{proof}[Proof of Proposition~\ref{prop_direct}] Let $\nu[P]$ be
  the probability measure for each photon, with a probability density
  given by Eq.~(\ref{eta}), and let
  $\{\nu[P_\theta]:\theta \in (-c,c)\}$ be the 1D submodel based on
  the $\{P_\theta\}$ given by Eqs.~(\ref{P_theta})--(\ref{theta}) and
  (\ref{a_mu}). The score $\nu_*S$ of $\{\nu[P_\theta]\}$ is given by
\begin{align}
(\nu_*S)(\xi) &= \frac{\dot\eta(\xi)}{\eta_0(\xi)},
\label{score_direct}
\\
\eta_\theta(\xi) &\equiv \eta(\xi,P_\theta) = \int h(\xi-x) P_\theta(dx),
\\
\dot\eta(\xi) &= \int h(\xi-x) a_\mu(x) P_0(dx).
\label{dot_eta}
\end{align}
Assume the normalized quantities defined in Definition~\ref{def_R} in
the following.  Using Taylor's theorem \cite{rudin_baby},
$h(\xi-x) = h(\xi-\Delta y)$ becomes
\begin{align}
h(\xi-\Delta y)
&= \sum_{p=0}^\mu \frac{h^{(p)}(\xi)}{p!}(-\Delta y)^p
+ r(\xi,y) \Delta^{\mu+1},
\end{align}
where the remainder is given by
\begin{align}
r(\xi,y) &\equiv \frac{h^{(\mu+1)}(\xi-\Delta \bar y)}{(\mu+1)!}
(-y)^{\mu+1},
\label{remainder}
\end{align}
for a certain $\bar y(\xi,y)$ between $0$ and
$y$. Equation~(\ref{dot_eta}) becomes
\begin{align}
\dot\eta(\xi) &= 
\frac{h^{(\mu)}(\xi)}{\mu!}(-\Delta)^\mu V_{\mu\mu}
+ \Delta^{\mu+1} \Avg{r(\xi,y),b_\mu}_R.
\label{dot_eta2}
\end{align}
The last term can be bounded as
\begin{align}
\abs{\Avg{r(\xi,y),b_\mu}_R} &\le \norm{r(\xi,y)}_R
\label{step_csi_rb}
\\
&\le \frac{\bar{h}(\xi)}{(\mu+1)!}\norm{y^{\mu+1}}_R
\label{step_hmu1}
\\
&\le \frac{\bar{h}(\xi)}{(\mu+1)!},
\label{step_suppR}
\end{align}
where Eq.~(\ref{step_csi_rb}) has used the CSI, Eq.~(\ref{step_hmu1})
has used Eq.~(\ref{h_mu1}) to bound the
$|h^{(\mu+1)}(\xi-\Delta \bar y)|$ in Eq.~(\ref{remainder}), and
Eq.~(\ref{step_suppR}) comes from Eq.~(\ref{supp_standard}). Apply the
triangle inequality to Eq.~(\ref{dot_eta2}) to obtain
\begin{align}
\abs{\dot\eta(\xi)} &\le
\abs{\frac{h^{(\mu)}(\xi)}{\mu!}(-\Delta)^\mu V_{\mu\mu}}
+ \abs{\Delta^{\mu+1} \Avg{r(\xi,y),b_\mu}_R}
\\
&\le \frac{\Delta^\mu}{\mu!} 
\Bk{V_{\mu\mu}\abs{h^{(\mu)}(\xi)}  + \frac{\Delta \bar{h}(\xi)}{\mu+1}}.
\end{align}
The norm of the score given by Eq.~(\ref{score_direct}) can be bounded
again by the triangle inequality as
\begin{align}
\Norm{\frac{\dot\eta}{\eta_0}}_{\nu[P_0]}
&\le \frac{\Delta^\mu}{\mu!}
\left[V_{\mu\mu} \Norm{\frac{h^{(\mu)}}{\eta_0}}_{\nu[P_0]} \right.
\nonumber\\&\qquad\quad
\left.
+\frac{\Delta}{\mu+1}\Norm{\frac{\bar{h}}{\eta_0}}_{\nu[P_0]}\right].
\label{score_bound}
\end{align}
Equation~(\ref{h0}) implies that
\begin{align}
\eta_0(\xi) \ge \underline{h}(\xi),
\end{align}
while Eqs.~(\ref{finite_int}) imply that
\begin{align}
\Norm{\frac{h^{(\mu)}}{\eta_0}}_{\nu[P_0]}^2 &\le 
\int \frac{[h^{(\mu)}(\xi)]^2}{\underline{h}(\xi)} d\xi < \infty,
\\
\Norm{\frac{\bar{h}}{\eta_0}}_{\nu[P_0]}^2
&\le 
\int \frac{[\bar{h}(\xi)]^2}{\underline{h}(\xi)} d\xi < \infty.
\end{align}
The convergence of these quantities means that Eq.~(\ref{score_bound})
can be written as
\begin{align}
\Norm{\frac{\dot\eta}{\eta_0}}_{\nu[P_0]} &= O(\Delta^\mu),
\end{align}
and the Fisher information of the submodel becomes
\begin{align}
&\quad 
\norm{\nu_* S}_{\nu[P_0]}^2 
= \Norm{\frac{\dot\eta}{\eta_0}}_{\nu[P_0]}^2 =O(\Delta^{2\mu}).
\end{align}
Given this expression and Eq.~(\ref{dot_beta_order}), the Cram\'er-Rao
bound for the $M$-temporal-mode submodel
$\{\nu^{(M,\epsilon)}[P_\theta]:\theta\in (-c,c)\}$ is
\begin{align}
\mathsf C &= \frac{\dot\beta^2}{N\norm{\nu_* S}_{\nu[P_0]}^2}
= \frac{\Omega(1)}{N},
\end{align}
and the fact that any submodel bound is a lower bound on the
semiparametric bound $\tilde{\mathsf C}$ \cite[Lemma~25.19]{vaart}
leads to Eq.~(\ref{CRB_direct_bound}).
\end{proof}

\bibliography{research2}

\end{document}